\documentclass[a4paper,UKenglish,cleveref, autoref, thm-restate]{lipics-v2021}

\hideLIPIcs  

\usepackage{xspace}
\usepackage{stmaryrd}
\usepackage[ruled,vlined]{algorithm2e}

\title{A faster polynomial-space algorithm for Hamiltonian cycle parameterized by treedepth} 

\author{Stefan Kratsch}{Humboldt-Universit\"at zu Berlin, Berlin, Germany}{stefan.kratsch@hu-berlin.de}{https://orcid.org/0000-0002-1825-0097}{}

\authorrunning{S. Kratsch} 

\Copyright{Stefan Kratsch} 

\ccsdesc[500]{Theory of computation~Fixed parameter tractability}
\ccsdesc[500]{Theory of computation~Graph algorithms analysis}

\keywords{Hamiltonian cycle, connectivity, polynomial space, treedepth} 

\category{} 

\relatedversion{} 

\nolinenumbers 

\EventEditors{John Q. Open and Joan R. Access}
\EventNoEds{2}
\EventLongTitle{42nd Conference on Very Important Topics (CVIT 2016)}
\EventShortTitle{CVIT 2016}
\EventAcronym{CVIT}
\EventYear{2016}
\EventDate{December 24--27, 2016}
\EventLocation{Little Whinging, United Kingdom}
\EventLogo{}
\SeriesVolume{42}
\ArticleNo{23}

\newcommand{\integers}{\ensuremath{\mathbb{Z}}\xspace}
\newcommand{\naturals}{\ensuremath{\mathbb{N}}\xspace}

\newcommand{\tw}{\mathrm{tw}}
\newcommand{\pw}{\mathrm{pw}}
\newcommand{\td}{\mathrm{td}}

\newcommand{\C}{\ensuremath{\mathcal{C}}\xspace}
\newcommand{\Oh}{\mathcal{O}}
\newcommand{\T}{\ensuremath{\mathcal{T}}\xspace}

\newcommand{\tree}{\mathtt{tree}}
\newcommand{\tail}{\mathtt{tail}}
\newcommand{\children}{\mathtt{children}}
\newcommand{\broom}{\mathtt{broom}}

\newcommand{\monomial}{\mathtt{mon}}
\newcommand{\tuple}{\mathtt{tpl}}
\newcommand{\disjoint}{\mathtt{dsj}}

\newcommand{\ComputeP}{\texttt{ComputeP}\xspace}

\newcommand{\yes}{\texttt{yes}\xspace}
\newcommand{\no}{\texttt{no}\xspace}

\newcommand{\iverson}[1]{\ensuremath{\llbracket #1 \rrbracket}}

\newcommand{\wdpm}{\ensuremath{\mathcal{M}_{w,\ell}}}

\newcommand{\weights}{\ensuremath{\mathbf{w}}\xspace}

\newcommand{\problem}[1]{\textsc{#1}\xspace}
\newcommand{\qcoloring}{\problem{$q$-Coloring}}

\newcommand{\maxcut}{\problem{Max Cut}}

\newcommand{\longcycle}{\problem{Long Cycle}}

\newcommand{\deletiontorcolorable}{\problem{Deletion To $r$-Colorable}}
\newcommand{\dominatingset}{\problem{Dominating Set}}
\newcommand{\independentset}{\problem{Independent Set}}
\newcommand{\hamiltoniancycle}{\problem{Hamiltonian Cycle}}

\newcommand{\longpath}{\problem{Long Path}}
\newcommand{\longestpath}{\problem{Longest Path}}
\newcommand{\hamiltonianpath}{\problem{Hamiltonian Path}}
\newcommand{\oddcycletransversal}{\problem{Odd Cycle Transversal}}
\newcommand{\partialcyclecover}{\problem{Partial Cycle Cover}}
\newcommand{\mincyclecover}{\problem{Min Cycle Cover}}
\newcommand{\qcnfsat}{\problem{$q$-CNF Sat}}
\newcommand{\tsp}{\problem{Traveling Salesperson}}
\newcommand{\vertexcover}{\problem{Vertex Cover}}

\newcommand{\class}[1]{\ensuremath{\mathsf{#1}}\xspace}
\newcommand{\classP}{\class{P}}
\newcommand{\NP}{\class{NP}}

\begin{document}

\maketitle

\begin{abstract}
 A large number of \NP-hard graph problems can be solved in $f(w)n^{\Oh(1)}$ time and space when the input graph is provided together with a tree decomposition of width $w$, in many cases with a modest exponential dependence $f(w)$ on $w$. Moreover, assuming the Strong Exponential-Time Hypothesis (SETH) we have essentially matching lower bounds for many such problems. They main drawback of these results is that the corresponding dynamic programming algorithms use exponential space, which makes them infeasible for larger $w$, and there is some evidence that this cannot be avoided.
 
 This motivates using somewhat more restrictive structure/decompositions of the graph to also get good (exponential) dependence on the corresponding parameter but use only polynomial space. A number of papers have contributed to this quest by studying problems relative to treedepth, and have obtained fast polynomial space algorithms, often matching the dependence on treewidth in the time bound. E.g., a number of connectivity problems could be solved by adapting the cut-and-count technique of Cygan et al.\ (FOCS 2011, TALG 2022) to treedepth, but this excluded well-known path and cycle problems such as \hamiltoniancycle (Hegerfeld and Kratsch, STACS 2020).
 
 Recently, Nederlof et al.\ (SIDMA 2023) showed how to solve \hamiltoniancycle, and several related problems, in $5^\tau n^{\Oh(1)}$ randomized time and polynomial space when provided with an elimination forest of depth $\tau$. We present a faster (also randomized) algorithm, running in $4^\tau n^{\Oh(1)}$ time and polynomial space, for the same set of problems. We use ordered pairs of what we call \emph{consistent} matchings, rather than perfect matchings in an auxiliary graph, to get the improved time bound.
\end{abstract}

\section{Introduction}
\label{section:introduction}

It is widely believed that \NP-hard problems do not admit polynomial-time algorithms for solving them exactly (i.e., that $\classP\neq\NP$). A core question of parameterized complexity is, therefore, to what extent input structure can be leveraged to nevertheless get reasonably fast algorithms, i.e., \emph{how structure affects complexity}. It is well known, for example, that a large number of \NP-hard graph problems can be solved even in linear time on graphs of bounded treewidth, i.e., in time $f(\tw)(n+m)$ via Courcelle's theorem~\cite{DBLP:journals/iandc/Courcelle90} and its extensions. While the function $f(\tw)$ is enormous in general, much better bounds were obtained for a variety of problems.\footnote{Often at the cost of a modest increase in the polynomial dependence on the graph size, and with the best bounds assuming that a tree decomposition of the graph is already known.} More recently, initiated by Lokshtanov et al.~\cite{DBLP:journals/talg/LokshtanovMS18}, it was shown that many of these improved bounds were in fact optimal under the Strong Exponential-Time Hypothesis (SETH),\footnote{SETH is the hypothesis that for all $\varepsilon>0$ there is $q\in\naturals$ such that \qcnfsat cannot be solved in $\Oh((2-\varepsilon)^n)$ time for formulas with $n$ variables.} e.g., \dominatingset can be solved in $3^{\tw}n^{\Oh(1)}$ but not in $(3-\varepsilon)^{\tw}n^{\Oh(1)}$ time for any $\varepsilon>0$ (assuming SETH). In some cases, arriving at such tight bounds first required novel algorithmic tools, such as the \emph{cut-and-count} technique of Cygan et al.~\cite{DBLP:journals/talg/CyganNPPRW22}, which enabled single-exponential dependence on the treewidth for many connectivity-related problems.

With these much better (and possibly optimal) bounds relative to the treewidth there still remains a caveat: All these algorithms rely on some form or other of dynamic programming (DP) and they essentially use as much space as time, i.e., exponential space relative to treewidth (and ditto for other width parameters). While the (conditionally) optimal bounds are very satisfying, this is quite an obstacle for applying the algorithms on a wider scale, as a lack of memory stops a computation whereas time is (somewhat) more readily available.\footnote{In the words of Gerhard Woeginger~\cite{DBLP:conf/iwpec/Woeginger04}: ``Note that algorithms with exponential space complexities are absolutely useless for real life applications.''} Unfortunately, this does not come from a lack of algorithmic design capability, but it is widely believed that these time bounds are not obtainable using significantly less space. Moreover, there are both unconditional lower bounds and completeness for specific models~\cite{DBLP:conf/esa/DruckerNS16,DBLP:journals/algorithms/ChenRRV18} as well as general, conditional lower bounds~\cite{DBLP:journals/toct/PilipczukW18}. Thus, to get polynomial space, we will have to reduce the generality of the input structure that we can deal with, i.e., we need to use more restrictive graph parameters corresponding to more restrictive structure and decompositions.

A key parameter in this regard is the \emph{treedepth} of the graph (along with several generalizations): The treedepth of a graph $G=(V,E)$ is defined as the minimum depth of a rooted forest $\T=(V,F)$ such that for each edge $\{u,v\}\in E$ one of $u$ and $v$ is an ancestor of the other in $\T$. Such a forest $\T$ is then also called an \emph{elimination forest} of $G$, which comes from a different but equivalent way of defining treedepth. It is well known (and easy to see) that treedepth is at least as big as pathwidth, which is at least as big as treewidth; at the same time, the treedepth of a graph with $n$ vertices is at most $\log n$ times its treewidth. Where path and tree decompositions lend themselves to dynamic programming (using exponential space), elimination forests also often allow branching algorithms with the same parameter dependence but using only polynomial space. E.g., \dominatingset can be solved in $3^{\td}n^{\Oh(1)}$ time and polynomial space for graphs of treedepth $\td$ \cite{DBLP:journals/toct/PilipczukW18}.

Generally, branching seems the method of choice to get fast polynomial-space algorithms, yet even simple dynamic programming approaches do not easily carry over to branching: Intuitively, in partial solutions maintained by DP algorithms we usually have changing roles/states of vertices, e.g., a vertex turning from undominated to dominated, which at first glance is hard to do in a branching algorithm that essentially gives each vertex a fixed state (though will make that ``choice'' many times in its run). Only for what one may call ``static'' partial solutions, e.g., for \qcoloring, is there an immediate way to transfer from DP to branching. In this way, given an elimination forest of depth $\tau$, it is straightforward to solve \independentset and \qcoloring in $2^\tau n^{\Oh(1)}$ resp.\ $q^\tau n^{\Oh(1)}$ time and polynomial space. 

F\"urer and Yu~\cite{DBLP:journals/mst/FurerY17} adapted the algebraization technique of Lokshtanov and Nederlof~\cite{DBLP:conf/stoc/LokshtanovN10} to a dynamic setting to obtain an algorithm that counts the number of perfect matchings in $2^\tau n^{\Oh(1)}$ time and polynomial space. Chen et al.~\cite{DBLP:journals/algorithms/ChenRRV18} (next to lower bounds above) solve \dominatingset in either $\tau^{\Oh(\tau^2)}n$ time and $\tau^{\Oh(1)}\log n$ space or in $3^\tau\log\tau n$ time and $\Oh(2^\tau\tau\log\tau+\tau\log n)$ space. Building on F\"urer and Yu~\cite{DBLP:journals/mst/FurerY17}, Pilipczuk and Wrochna~\cite{DBLP:journals/toct/PilipczukW18} design a $3^\tau n^{\Oh(1)}$ time and polynomial space algorithm for \dominatingset, and then improve its space usage to as little as $\Oh(\tau \log n)$ using Fourier transform and Chinese remaindering (next to foundational work above). Independently, Belbasi and F\"urer~\cite{DBLP:journals/corr/abs-1711-10088} got the same time bound but use a little more space. A different work of Belbasi and F\"urer~\cite{DBLP:conf/csr/BelbasiF19} solves \hamiltoniancycle in $(4w)^\tau n^{\Oh(1)}$ time and $\Oh(w\tau n\log n)$ space where $w$ is the width and $\tau$ the depth of a given tree decomposition.\footnote{F\"urer and Yu~\cite{DBLP:journals/mst/FurerY17} point out that treedepth can also be characterized as the maximum number of forget nodes in any nice tree decomposition of the graph, which they call the depth of the decomposition.} The same time but with an extra factor of sum of edge weights in space is also obtained for \tsp~\cite{DBLP:conf/csr/BelbasiF19}. 

Hegerfeld and Kratsch~\cite{DBLP:conf/stacs/HegerfeldK20} adapted the cut-and-count technique of Cygan et al.~\cite{DBLP:journals/talg/CyganNPPRW22} to obtain, for a number of connectivity problems, the same (randomized) time bound as relative to treewidth but using only polynomial space. The main omission in their work, and seemingly less amenable to cut-and-count relative to treedepth, are the path and cycle problems covered by Cygan et al.\ such as \hamiltoniancycle and \mincyclecover. This was addressed by Nederlof et al.~\cite{DBLP:journals/siamdm/NederlofPSW23} who used a different approach via perfect matchings in an auxiliary to solve \partialcyclecover in $5^\tau n^{\Oh(1)}$ randomized time and polynomial space; in this problem, given graph $G$ and numbers $k,\ell$ we need to decide whether exactly $\ell$ vertices of $G$ can be covered with at most $k$ vertex-disjoint cycles. From this, they directly get the same bounds for a bunch of related problems such as \hamiltoniancycle, \longpath, and \mincyclecover. They point out the gap to the dependence of $(2+\sqrt{2})^p n^{\Oh(1)}$ time and space relative to the width $p$ of a given path decomposition~\cite{DBLP:journals/jacm/CyganKN18} and ask whether that time can be matched relative to treedepth but using polynomial space only. That being said, any improvement to the base $5$ in the dependence on $\tau$ would be of interest~\cite{DBLP:journals/siamdm/NederlofPSW23}.

\subparagraph{Our work.}
We design a faster polynomial space Monte-Carlo algorithm for the \partialcyclecover problem (and thereby also for \hamiltoniancycle). Our algorithm runs in $4^{\tau}n^{\Oh(1)}$ time and polynomial space when provided with an elimination forest of depth $\tau$ for the graph. Like the algorithm of Nederlof et al.~\cite{DBLP:journals/siamdm/NederlofPSW23} it does not make false positives and the chance of a false negative is at most $\frac12$. As a corollary, observed by Nederlof et al.~\cite{DBLP:journals/siamdm/NederlofPSW23}, we get the same bounds also for a number of classical path and cycle problems such as \hamiltoniancycle and \longestpath.

\begin{theorem}
 \label{theorem:main:partialcyclecover}
 There is a $4^{\tau}n^{\Oh(1)}$ time and polynomial space Monte-Carlo algorithm that, given a graph $G$, numbers $k,\ell\in\naturals$, and an elimination forest of depth $\tau$ for $G$, solves \partialcyclecover in $4^{\tau}n^{\Oh(1)}$ time and polynomial space. Errors are limited to false negatives and occur with probability at most $\frac12$.
\end{theorem}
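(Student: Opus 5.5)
We plan to follow the Nederlof et al.\ framework: reduce \partialcyclecover to a weighted counting problem over a characteristic-2-style algebraic object, isolate solutions via random weights, and evaluate the resulting polynomial by branching on the elimination forest. Their auxiliary-graph perfect-matching formulation is where the base $5$ arises. We replace it with ordered pairs of \emph{consistent} matchings on $V(G)$. A set of vertex-disjoint cycles covering a vertex set $X$ (plus trivial/loop handling) is the union of two perfect matchings $M_1,M_2$ of $G[X]$, with $M_1\cup M_2$ decomposing into alternating cycles. Even cycles of length $\ge 4$ are the only cycles representable this way, so we first apply the standard subdivision/doubling trick to make cycle parity harmless. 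Each cycle of length $2r$ then corresponds to exactly $2$ ordered pairs once orientation is accounted for. Working modulo $2$ (or over a field of characteristic $2$ with a cycle-counting variable $z$), we count ordered pairs $(M_1,M_2)$ weighted by $z^{\#\text{cycles}}$. Isolation via the Schwartz--Zippel lemma on random edge weights, together with tracking the number $\ell$ of covered vertices by a formal variable, gives the Monte-Carlo guarantee. The guarantee has no false positives, since nonzero evaluations certify a genuine cycle cover, and false negatives occur with probability at most $\frac12$.

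The key step is the branching over the forest. For a vertex $v$ at depth $d$ with ancestor set $A_v$ ($|A_v|<\tau$), we recursively compute a polynomial $P_v(S)$ for each state $S$ of the ancestors. A state assigns to each ancestor $u$ whether it is unused, matched in $M_1$ into the current subtree, matched in $M_2$ into the current subtree, or both. That gives $4$ states per ancestor, hence $4^\tau$, versus $5$ in the auxiliary-graph approach. ``Consistent'' should mean that the partial matchings restricted to the subtree agree with these demands. We combine children by a subset-convolution-like product over disjoint demands. To keep polynomial space we cannot tabulate $P_v$ over all $4^{\tau}$ states. Instead we apply the Fürer--Yu / Pilipczuk--Wrochna device: a zeta/Möbius (or Fourier) transform over the state lattice, chosen so that the disjoint-union constraint becomes a pointwise product in the transformed domain. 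We then run the recursion separately for each transformed coordinate. This costs $4^\tau$ evaluations of a polynomial-space depth-first recursion, so the total is $4^\tau n^{\Oh(1)}$ time.

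The main obstacle is counting cycles, i.e.\ tracking the number $k$ of components of $M_1\cup M_2$, which is inherently global. Our plan is to count cycles via a cut-and-count style labeling. Each cycle receives a label or colour from a small set consistently along its alternation, and we sum over labelings that are locally checkable on edges. Then the number of labelings is $c^{\#\text{cycles}}$, and we extract the $k$-dependence via polynomial interpolation in $c$ (or a formal variable). This labelling must be absorbed into the per-ancestor state without increasing the base beyond $4$. We will try to make labels local to the edge-matching choices, so that they appear as polynomial factors rather than extra states. Verifying that the transformed product remains well defined and that non-cycle structures cancel in the chosen characteristic is the delicate part of the proof.
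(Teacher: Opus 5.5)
Your skeleton agrees with the paper. You subdivide every edge to make $G$ bipartite (the depth grows by one) and encode a partial cycle cover on $\ell$ vertices as an ordered pair of consistent matchings of size $\ell/2$. You give each ancestor two bits, one for $M_1$ and one for $M_2$, so that after a transform sibling subtrees multiply pointwise and each node is called with at most $4^{\tau}$ arguments in a polynomial-space recursion. In the paper that transform is an inclusion--exclusion over the $2n$ requirements ``$i\in V(E_1)\cup L$'' and ``$i\in V(E_2)\cup L$'', with coordinate $J\subseteq\tail^+(v)$. The matching and consistency conditions follow from these requirements together with the size constraints $|E_1|=|E_2|=\ell/2$ and $|L|=n-\ell$, which the formal variables $x,y,z$ track. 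Your description of this step is vague but in the right spirit.

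The genuine gap is how you handle the number of cycles. Working modulo $2$, or over a field of characteristic $2$, annihilates everything. As you observe yourself, a cover with $p$ cycles corresponds to exactly $2^{p}$ ordered pairs, so every nonempty cover contributes $0$; attaching $z^{p}$ or a label factor $c^{p}$ does not change this. The cut-and-count labelling you then propose for counting cycles is not specified. You do not say how it avoids enlarging the per-ancestor state, which it naively multiplies. Moreover, cut-and-count for cycle problems is exactly what has so far resisted transfer to branching on elimination forests. As written, the argument does not yield the theorem.

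The missing idea is that no cycle counting is needed: the factor $2^{p}$ you already found is itself the cycle counter. Compute $|\wdpm|$ exactly over $\integers$ (all coefficients have polynomially many bits) for every target weight $w$. Answer yes if and only if some $|\wdpm|$ is nonzero modulo $2^{k+1}$. Covers with more than $k$ cycles contribute multiples of $2^{k+1}$. So if no cover with at most $k$ cycles on exactly $\ell$ vertices exists, every count vanishes modulo $2^{k+1}$, and there are no false positives.

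Otherwise you must prevent several good covers from cancelling; two covers with $k$ cycles each already contribute $2^{k+1}$. The Isolation Lemma handles this. With $\weights\colon E\to[2|E|]$ chosen at random, the minimum-weight cover with at most $k$ cycles is unique with probability at least $\frac12$. At that weight the count is $2^{p}$ with $p\le k$, plus a multiple of $2^{k+1}$, hence nonzero modulo $2^{k+1}$. A Schwartz--Zippel argument is not the right tool for this step. $\integers/2^{k+1}\integers$ is not a field, and it is uniqueness of the minimum, not non-vanishing of a polynomial, that rules out cancellation.
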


\begin{corollary}
 \label{corollary:main:applications}
 Given a graph $G$, and $k,\ell$ as needed, and an elimination forest of depth $\tau$ for $G$, one can solve \hamiltoniancycle, \hamiltonianpath, \longcycle, \longpath, and \mincyclecover in $4^{\tau}n^{\Oh(1)}$ time and polynomial space with false negatives only and error chance at most $\frac12$.
\end{corollary}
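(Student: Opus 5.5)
We plan to follow the algebraic blueprint of Nederlof et al.\ and replace its counting object. A partial cycle cover with $\ell$ vertices and $c$ cycles on vertex set $X$ (cycles of length at least $3$; length-$2$ cycles handled separately or excluded by a weight tag) will be encoded by ordered pairs $(M_1,M_2)$ of matchings of $G$ that are \emph{consistent}. Here consistency is a local condition: $M_1$ and $M_2$ are both perfect matchings of $G[X]$ whose union is a disjoint union of even cycles. Odd cycles are handled by splitting them into alternating parts. The cleaner route is to work with degree-$2$ subgraphs $H$ directly, and to count, for each such $H$, the pairs $(M_1,M_2)$ where $M_1 \cup M_2 = H$ and the two matchings respect a fixed orientation-type labelling. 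Each cycle then contributes a factor of $2$, so $H$ is counted with weight $2^{\mathrm{cc}(H)}$.

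With isolation weights (Isolation Lemma, random edge weights from $\{1,\dots,2|E|\}$) we compute the polynomial
\[
P(x,y,z)=\sum_{H}\ 2^{\mathrm{cc}(H)}\,x^{w(H)}y^{|V(H)|}z^{\mathrm{cc}(H)}
\]
modulo a suitable prime, or alternatively over $\integers$ with a counting argument. A cover with exactly $\ell$ vertices and at most $k$ cycles exists if and only if some coefficient with $y^\ell z^{c}$, $c\le k$, is nonzero, and with probability at least $\frac12$ the minimum weight term is unique.

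The key steps are as follows.
\begin{enumerate}
\item Define consistent matchings precisely, via the states of a vertex $v$: unused, or used with its $M_1$-partner and $M_2$-partner both above $v$, both below $v$, or mixed. Show the bijection and the $2^{\mathrm{cc}}$ weighting.
\item Design a branching recursion over the elimination forest $\T$. For a node $v$ we guess, for each ancestor on the root path, which of these states remain ``open''. We then compute, for each subtree, a generating polynomial conditioned on a vector of open demands. This is exactly as Pilipczuk--Wrochna and F\"urer--Yu do it for perfect matchings: we convert the ``exactly matched once'' constraint into ``available / not available'' via inclusion--exclusion (zeta/M\"obius transform) so that the child subtrees become independent and their contributions multiply.
\item Bound the time. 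Per vertex on a root-to-leaf path we need only $4$ transformed states, namely available or not for $M_1$ times available or not for $M_2$. This gives $4^\tau n^{\Oh(1)}$ with recursion depth $\tau$ and polynomial space.
\item Conclude the corollaries by the standard reductions of Nederlof et al.
\end{enumerate}

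The cycle count $\mathrm{cc}(H)$ is the main obstacle, since it is global and cannot be read off from states. I would try to avoid tracking connectivity entirely, using a determinant/cycle-cover style cancellation. Attach to each cycle a free sign or weight through the two possible orientations (the ordered pair choice), so that the number of components appears as the exponent of $2$ or of a formal variable. Pairs of matchings $(M_1,M_2)$ whose union is $H$ then number exactly $2^{\mathrm{cc}(H)}$ when every cycle is even. To handle odd cycles, I would apply the matching framework to a bipartite double cover, with one vertex copy $v^{\mathrm{out}},v^{\mathrm{in}}$ per vertex, and count cycle covers as perfect matchings of that bipartite graph restricted to symmetric structures. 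Compared with Nederlof et al.'s auxiliary graph, this only needs $4$ rather than $5$ transformed states per ancestor, because consistency removes their extra state.

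Verifying that the inclusion--exclusion indeed factorizes with only four states per ancestor, while still recovering $\mathrm{cc}$ through the exponent of $2$ extracted via a formal variable, is the step I expect to require the most care.
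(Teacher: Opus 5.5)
Your last step, reducing the listed problems to \partialcyclecover as Nederlof et al.\ do, is exactly how the paper proves this corollary. Your plan for the underlying $4^{\tau}n^{\Oh(1)}$ algorithm for \partialcyclecover, however, has a genuine gap at the point you flag yourself: the cycle count. Your polynomial $P(x,y,z)$ carries $z^{\mathrm{cc}(H)}$, and your decision rule reads off coefficients of $y^{\ell}z^{c}$ with $c\le k$. But nothing in your four transformed states (nor in the paper's) tracks connectivity. What such an inclusion--exclusion over $\T$ actually computes is the specialization $z=1$. That is, for each weight $w$ it gives the number $|\mathcal{M}_{w,\ell}|=\sum_H 2^{\mathrm{cc}(H)}$ of ordered consistent pairs, summed over partial cycle covers $H$ on $\ell$ vertices of weight $w$. ``Extracting the exponent of $2$ via a formal variable'' is not a mechanism. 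Working ``modulo a suitable prime'' is the wrong modulus: modulo an odd prime the contributions $2^{\mathrm{cc}(H)}$ of covers with many cycles are not eliminated, so a nonzero value says nothing about $k$. The missing idea is to reduce $|\mathcal{M}_{w,\ell}|$ modulo $2^{k+1}$ for every $w$. Covers with more than $k$ cycles contribute multiples of $2^{k+1}$, so if no cover with at most $k$ cycles exists, every residue is $0$ and there are no false positives. Covers with at most $k$ cycles can still cancel among themselves; for example, two of them with $k$ cycles and equal weight give $2^{k+1}$. This is why the isolation lemma must be applied to the family $\mathcal{C}_{k,\ell}$ of covers with at most $k$ cycles on exactly $\ell$ vertices, with weights from $\{1,\ldots,2|E|\}$. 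At the weight of the unique minimum member $C^*$, the residue is $2^{p}\not\equiv 0 \pmod{2^{k+1}}$, where $p\le k$ is the number of cycles of $C^*$.

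The second gap is odd cycles. The union of a consistent pair ($M_1\cap M_2=\emptyset$, $V(M_1)=V(M_2)$) always consists of even cycles, and an odd cycle has no perfect matching, so it cannot be ``split into alternating parts''. The paper simply subdivides every edge. The new graph is bipartite, covers on $\ell$ vertices with $c$ cycles correspond bijectively to covers on $2\ell$ vertices with $c$ cycles, and attaching each subdivision vertex as a child of the deeper endpoint of its edge gives an elimination forest of depth $\tau+1$. This costs only a factor $4$. Your out/in double cover could plausibly be made to work instead. One would count arc sets under the requirements ``$v$ is a tail or $v\in L$'' and ``$v$ is a head or $v\in L$'', forbidding both arcs of one edge in the local edge factor. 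As written, though, it switches to perfect matchings of an auxiliary graph, defers 2-cycles and uncovered vertices to ``separate handling'', and does not show that four states suffice. Relatedly, your Step~2 does not explain how ``matched exactly once'' and ``unused'' fit into four states. The paper handles this with a ground set of all triples $(E_1,E_2,L)$, where $E_1,E_2$ are disjoint edge sets of size $\ell/2$ and $L$ is a vertex set of size $n-\ell$, together with the $2n$ requirements $v\in V(E_1)\cup L$ and $v\in V(E_2)\cup L$. A counting argument shows that meeting all requirements forces $E_1,E_2$ to be matchings on $V\setminus L$. Disjointness of $E_1$ and $E_2$ is enforced locally by the edge factor $1+x\omega^{\mathbf{w}(e)}+y\omega^{\mathbf{w}(e)}$.
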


Our algorithm crucially relies on properties of what we call consistent pairs of matchings: Say that two matchings $M_1$ and $M_2$ are \emph{consistent} if (1) $M_1\cap M_2=\emptyset$ and (2) $V(M_1)=V(M_2)$. Using well known relations between matchings and (partial) cycle covers one immediately observes that the union of two matchings $M_1$ and $M_2$ in $G$ is a partial cycle cover in $G$ if and only if $M_1$ and $M_2$ are consistent. Note, however, that the resulting partial cycle covers can only have even-length cycles. Fortunately, we can easily reduce \partialcyclecover to the special case where $G$ is bipartite, while only adding $1$ to the depth $\tau$, so that all partial cycle covers consist of even-length cycles only.

Our main technical contribution is a $4^{\tau}n^{\Oh(1)}$ time and polynomial space deterministic algorithm that, given a graph $G=(V,E)$, $w,\ell\in\naturals$, edge weights $\weights\colon E\to\{1,\ldots,2\ell\}$, and a an elimination forest of depth $\tau$ for $G$, computes the the number of \emph{ordered pairs} of consistent matchings of cardinality $\frac\ell2$ each and total edge weight $w$, which we denote by $|\wdpm|$. This itself is based on an inclusion-exclusion formula for $|\wdpm|$ with $2n$ requirements.

To come back to partial cycles covers, it is easy to see (and detailed later) that a partial cycle cover with exactly $p$ cycles visiting exactly $\ell$ vertices corresponds to $2^p$ ordered pairs of consistent matchings of cardinality $\frac\ell2$, which is similar to the symmetry used by Nederlof et al.~\cite{DBLP:journals/siamdm/NederlofPSW23}. By a standard application of the isolation lemma, we can reduce (with at least $\frac12$ chance of success) to the case where there either is no partial cycle cover with at most $k$ cycles on exactly $\ell$ vertices, or there is a unique such partial cycle cover of minimum total edge weight. Thus, counting the ordered pairs of consistent matchings modulo $2^{k+1}$ will let us detect which of the two cases is true, as all partial cycle covers with $p\geq k+1$ cycles contribute a multiple of $2^{k+1}$, which is congruent to $0$.

\subparagraph{Related work.}
Recently, Bergougnoux et al.~\cite{DBLP:conf/soda/BergougnouxCS26} presented a logic-based meta-theorem for problems parameterized by treedepth: They show $2^{\Oh(\tau)} n^{\Oh(1)}$ time and polynomial space for problems expressible in what they call neighborhood operator logic with acyclicity, connectivity, and clique constraints. Dropping acyclicity and connectivity lets them improve the space to $\Oh(\tau \log n)$. This interesting result further establishes treedepth for obtaining fast parameterized algorithms with polynomial space, while still leaving room for hunting down better or even conditionally optimal time bounds (modulo SETH) for specific problems.

Speaking of conditional optimality (hence lower bounds), Jaffke and Jansen~\cite{DBLP:journals/dam/JaffkeJ23} showed that (modulo SETH) there is no $(q-\varepsilon)^{|X|}n^{\Oh(1)}$ time algorithm for \qcoloring when provided with $X$ such that $G-X$ is a disjoint union of paths; this implies the same lower bound relative to treedepth using that $X$ gives rise to an elimination forest of depth $|X|+\log n$. Koutis et al.~\cite{DBLP:conf/iwpec/Koutis0Z23} showed that under either of SETH or the Set Cover Conjecture (see~\cite{DBLP:journals/talg/CyganDLMNOPSW16}) \dominatingset parameterized by the size $vc$ of a given vertex cover cannot be solved in $\Oh((2-\varepsilon)^{vc}n^{\Oh(1)}$ time (and gave a matching algorithm relative to $vc$); the lower bound carries over to \dominatingset relative to treedepth but the best known algorithm takes $3^{\tau}n^{\Oh(1)}$ time. Hegerfeld and Kratsch~\cite{DBLP:conf/iwpec/HegerfeldK22} gave several lower bounds that carry over relative to treedepth, e.g., that (modulo SETH) there is no $(r+1-\varepsilon)^\tau n^{\Oh(1)}$ time algorithm for \deletiontorcolorable, which generalizes \vertexcover and \oddcycletransversal. This was pushed further by Esmer et al.~\cite{DBLP:conf/icalp/EsmerFMR24} who cover also some packing problems and \dominatingset,\footnote{Notably, for \dominatingset this does match base $3$ as for treewidth, but only rules out base $2-\varepsilon$.} and with a bit less structure required, but it remains that reproducing the bases relative to treewidth works best for ``coloring-like'' problems. E.g., we know nothing about lower bounds for \hamiltoniancycle relative to treedepth.

We would be remiss not to mention the state of the art for actually finding optimal elimination forests, i.e., with depth $\tau=\td(G)$ (or testing whether $\td(G)\leq\tau$, given $\tau$): Reidl et al.~\cite{DBLP:conf/icalp/ReidlRVS14} were able to do this in $2^{\Oh(\tau^2)}\cdot n$ time and exponential space. More recently, Nadara et al.~\cite{DBLP:conf/esa/NadaraPS22} showed $2^{\Oh(\tau^2)}\cdot n^{\Oh(1)}$ time and polynomial space, but also $2^{\Oh(\tau^2)}\cdot n$ randomized time and $\tau^{\Oh(1)}\cdot n$ space. As pointed out by Nederlof et al.~\cite{DBLP:journals/siamdm/NederlofPSW23}, we are ``missing'' a constant-factor approximation in $2^{\Oh(\tau)}n^{\Oh(1)}$ time, ideally with polynomial space.

Bergougnoux et al.~\cite{DBLP:journals/toct/BergougnouxCGKMOPL25} showed that the existence of fast polynomial-space algorithms can be pushed beyond treedepth by presenting single-exponential time and polynomial-space algorithms for \independentset, \dominatingset, and \maxcut relative to shrubdepth, which is a dense generalization of treedepth and similar also to cliquewidth.\footnote{We omit the time bounds here since they only make sense after introducing the necessary \emph{tree models}.}

\subparagraph{Organization.}
Section~\ref{section:preliminaries} provides the necessary notation and recalls the isolation lemma. In Section~\ref{section:ieformula} we derive an inclusion-exclusion formula for the number $|\wdpm|$ of ordered pairs of consistent matchings of given cardinality and weight.
Section~\ref{section:computingwdpm} is the longest and most technical part, and shows how to recursively compute $|\wdpm|$ given an elimination forest for the graph.
Section~\ref{section:applications} shows the algorithm for \partialcyclecover and the implied further applications.
We conclude in Section~\ref{section:conclusion} with a few questions for further research.

\section{Preliminaries}
\label{section:preliminaries}

We agree that $\naturals=\{0,1,\ldots\}$. We use $[n]:=\{1,\ldots,n\}$ and $[a,b]:=\{a,a+1,\ldots,b\}$. We use $\iverson{\ldots}$ for Iverson's bracket notation, where, for a proposition $p$, we have $\iverson{p}=1$ if $p$ is true, and $\iverson{p}=0$ otherwise. Note, e.g., that $\iverson{p\wedge q}=\iverson{p}\iverson{q}$.

\subparagraph{Graphs.}
All graphs in this work are undirected and simple. We use standard graph notation (cf.~\cite{DBLP:books/daglib/0030488}). For a graph $G=(V,E)$ and $U\subseteq V$ we denote by $\delta[U]$ the set of edges having \emph{at least} one endpoint in $U$, formally $\delta[U]=\{e\in E\mid e\cap U\neq\emptyset\}$. A \emph{matching} in $G=(V,E)$ is a set $M\subseteq E$ of pairwise disjoint edges. We recall (from the introduction) that two matchings $M_1$ and $M_2$ are \emph{consistent} if (1) $M_1\cap M_2=\emptyset$ and (2) $V(M_1)=V(M_2)$. A \emph{partial cycle cover} in $G=(V,E)$ is a set $C\subseteq E$ of edges such that each vertex of $(V,C)$ has degree zero or two, i.e., it corresponds to a family of vertex-disjoint (simple) cycles in $G$.

\subparagraph{Treedepth.}
Let $G=(V,E)$ be a graph. An \emph{elimination forest} $\T$ for $G$ is a rooted forest on the same vertex set $V$ as $G$ (but allowed any edges on $G$) such that for each edge $\{u,v\}\in E$ either $u$ is an ancestor of $v$ in $T$ or vice versa. The \emph{depth} $\tau$ of an elimination forest is the largest number of vertices (not edges) from any root of the forest to a leaf (of its tree). The \emph{treedepth} of $G$ is the smallest depth $\tau$ over all elimination forests for $G$. Though $G$ and $\T$ have the same vertex set, we shall speak of vertex $v$ when referring to $v$ as a vertex of $G$, and of node $v$ when referring to its role in $\T$. E.g., $v$ can have child nodes (as a node of $\T$).

We use treedepth-related notation from~\cite{DBLP:journals/toct/PilipczukW18}, slightly extended by \cite{DBLP:conf/stacs/HegerfeldK20}: For a rooted forest $\T=(V,F)$ and node $v\in V$, we denote by $\tree[v]$ the set of nodes in the subtree of $v$ in $\T$, including $v$. Furthermore, we let $\tree(v):=\tree[v]\setminus\{v\}$. By $\tail[v]$ we denote the set of nodes that are ancestors of $v$, including $v$. Furthermore, we let $\tail(v):=\tail[v]\setminus\{v\}$. Finally, we use $\broom[v]:=\{v\}\cup\tail(v)\cup\tree(v)$, and we will not define or use $\broom(v)$. 

\subparagraph{Isolation lemma.}
We recall the well known isolation lemma due to Mulmuley et al.~\cite{DBLP:journals/combinatorica/MulmuleyVV87}:

\begin{lemma}[Isolation Lemma, \cite{DBLP:journals/combinatorica/MulmuleyVV87}]
 Let $\F\subseteq 2^U$ be a nonempty set family over a universe $U$ and let $N\in\naturals$. Let $\weights\colon U\to[N]$ be chosen uniformly and independently at random. Then with probability at least $1-\frac{|U|}{N}$ there is a unique $F\in\F$ of minimum size $\weights(F):=\sum_{u\in F}\weights(u)$.
\end{lemma}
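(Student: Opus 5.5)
The plan is the classical counting argument of Mulmuley et al. Call an element $u\in U$ \emph{ambiguous} if there exist $F,F'\in\F$, both of minimum weight, with $u\in F$ and $u\notin F'$. If there are two distinct minimum-weight sets $F\neq F'$, then any $u$ in their symmetric difference is ambiguous. It therefore suffices to show that each fixed $u$ is ambiguous with probability at most $\frac1N$, and then take a union bound over the $|U|$ elements. Nonemptiness of $\F$ guarantees that a minimum-weight set exists at all, so uniqueness is the only issue.

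Fix $u\in U$ and condition on the weights $\weights(u')$ for all $u'\neq u$; these are independent of $\weights(u)$. Define
\[
a:=\min\{\weights(F)\mid F\in\F,\,u\notin F\}, \qquad b:=\min\{\weights(F\setminus\{u\})\mid F\in\F,\,u\in F\},
\]
with the convention $\min\emptyset=\infty$. Both $a$ and $b$ are determined by the conditioned weights alone. The minimum weight over $\F$ is $\min(a,b+\weights(u))$. For $u$ to be ambiguous, a minimum-weight set avoiding $u$ and one containing $u$ must both exist, which forces $a=b+\weights(u)<\infty$. So $u$ is ambiguous only if $\weights(u)=a-b$.

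The value $a-b$ is fixed under the conditioning, and $\weights(u)$ is uniform on $[N]$. Hence the event $\weights(u)=a-b$ has probability at most $\frac1N$, and the same bound holds unconditionally after averaging over the conditioning.

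Taking the union bound over all $u\in U$, the probability that some element is ambiguous is at most $\frac{|U|}N$. With the complementary probability no element is ambiguous, and then the minimum-weight set is unique. This proves the lemma.

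The step that needs the most care is the reduction from non-uniqueness to ambiguity of a single fixed element. The key point is that $a-b$ does not depend on $\weights(u)$; this is what makes the conditioning argument valid.
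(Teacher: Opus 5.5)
Your proof is correct and is the standard threshold argument of Mulmuley, Vazirani, and Vazirani: your quantity $a-b$ is exactly their threshold for $u$, and the union bound over ambiguous elements matches their argument. The paper does not prove this lemma itself; it only cites that source, so your proposal follows the same approach.
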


\section{Inclusion-exclusion for ordered pairs of consistent matchings}
\label{section:ieformula}

Throughout this section let $G=(V,E)$ be a graph with vertex set $V=\{1,\ldots,n\}$ let $\weights\colon E\to\{1,\ldots,N\}$ be a weight function, and let $w,\ell\in\naturals$, with $\ell$ even and $\ell\leq n$. We will derive an inclusion-exclusion formula for the number of ordered pairs of consistent matchings of cardinality $\frac\ell2$ and total edge weight equal to $w$. Formally, this is equal to $|\wdpm|$ where
\begin{align*}
 \wdpm:= \{ (M_1,M_2) \mid{} & \mbox{$M_1$ and $M_2$ are consistent matchings in $G$}, |M_1|=|M_2|=\ell/2, \\
 & \weights(M_1\cup M_2)=w \}.
\end{align*}

We will define a ground set $U_{w,\ell}$ and requirements $A_{w,\ell,i}\subseteq U_{w,\ell}$, for $i\in[2n]$ such that
\begin{align}
 \label{math:wdpmviareq}
 \left| \bigcap_{i\in [2n]} A_{w,\ell,i} \right| = |\wdpm|.
\end{align}
The set $U_{w,\ell}$ is defined to contain all triples $(E_1,E_2,L)$ where $E_1$ and $E_2$ are disjoint edge subsets of $G$ of cardinality $\frac\ell2$ each and total weight equal to $w$, and $L$ is a vertex subset of $G$ of cardinality $n-\ell$. Formally,
\begin{align*}
 U_{w,\ell}:= \left\{ (E_1,E_2,L) \mid E_1,E_2\in \binom{E}{\ell/2}, E_1\cap E_2=\emptyset, \weights(E_1\cup E_2)=w, L\in\binom{V}{n-\ell} \right\}.
\end{align*}

We define $2n$ requirements $A_{w,\ell,i}\subseteq U_{w,\ell}$. The first $n$ enforce coverage of each $i\in V$ by $E_1$ or $L$, the second $n$ enforce coverage by $E_2$ or $L$. Formally,
\begin{align*}
 A_{w,\ell,i}:= \begin{cases}
                 \left\{ (E_1,E_2,L)\in U_{w,\ell} \mid i\in V(E_1)\cup L \right\} & \mbox{if $i\in[n]$}\\
                 \left\{ (E_1,E_2,L)\in U_{w,\ell} \mid i-n\in V(E_2)\cup L \right\} & \mbox{if $i\in[n+1,2n]$.}
                \end{cases}
\end{align*}

We show that this setup fulfills $(\ref{math:wdpmviareq})$.

\begin{lemma}
 $\left| \bigcap_{i\in [2n]} A_{w,\ell,i} \right| = |\wdpm|$.
\end{lemma}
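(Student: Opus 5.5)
We prove the lemma with an explicit bijection. The triples in the intersection $\bigcap_{i\in[2n]} A_{w,\ell,i}$ should be exactly the triples $(M_1,M_2,V\setminus V(M_1))$ with $(M_1,M_2)\in\wdpm$. The argument is a counting argument based on cardinalities: $E_1$ and $E_2$ have $\frac\ell2$ edges each, and $L$ has $n-\ell$ vertices.

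We first show that every triple in the intersection has this form. Let $(E_1,E_2,L)$ lie in all $A_{w,\ell,i}$. The first $n$ requirements give $V(E_1)\cup L=V$, so $|V(E_1)|+|L|\geq n$. Since $|L|=n-\ell$, this yields $|V(E_1)|\geq \ell$. But $E_1$ has only $\frac\ell2$ edges, so $|V(E_1)|\leq \ell$, with equality if and only if the edges of $E_1$ are pairwise disjoint. Hence $E_1$ is a matching with $|V(E_1)|=\ell$. The same inequalities are then tight, so $V(E_1)$ and $L$ are disjoint and partition $V$, giving $L=V\setminus V(E_1)$. The requirements $n+1,\ldots,2n$ give the same conclusions for $E_2$: it is a matching and $L=V\setminus V(E_2)$. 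Consequently $V(E_1)=V(E_2)$. Together with $E_1\cap E_2=\emptyset$ from the definition of $U_{w,\ell}$, this shows that $E_1$ and $E_2$ are consistent. They have cardinality $\frac\ell2$ and $\weights(E_1\cup E_2)=w$, so $(E_1,E_2)\in\wdpm$.

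Conversely, take $(M_1,M_2)\in\wdpm$ and set $L:=V\setminus V(M_1)$. Since $M_1$ is a matching with $\frac\ell2$ edges, $|V(M_1)|=\ell$ and so $|L|=n-\ell$. The sets $M_1$ and $M_2$ are disjoint, so $(M_1,M_2,L)\in U_{w,\ell}$. Every vertex lies in $V(M_1)\cup L=V$, and also in $V(M_2)\cup L=V$ because $V(M_2)=V(M_1)$. Hence the triple satisfies all $2n$ requirements.

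These two maps are mutually inverse. Going from a triple to a pair forgets $L$, and $L$ is determined as $V\setminus V(E_1)$, so no information is lost. This gives the claimed equality of cardinalities.

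The only non-routine step is the tightness argument in the first direction: the coverage requirements, combined with $|L|=n-\ell$, force $E_1$ and $E_2$ to be matchings and force $L$ to be exactly the complement of their common vertex set.
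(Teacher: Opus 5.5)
Your proof is correct and follows essentially the same route as the paper. The $2n$ coverage requirements together with $|E_1|=|E_2|=\frac\ell2$ and $|L|=n-\ell$ force $V(E_1)=V(E_2)=V\setminus L$ with $|V(E_1)|=|V(E_2)|=\ell$, so $E_1$ and $E_2$ are consistent matchings; conversely, $L:=V\setminus V(M_1)$ lifts each pair in $\wdpm$ to a triple, and $L$ is uniquely determined, which gives the bijection. Your explicit tightness step, that $\frac\ell2$ edges covering $\ell$ vertices must be pairwise disjoint, only spells out what the paper asserts directly.
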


\begin{proof}
 Let $(E_1,E_2,L)\in \bigcap_{i\in [2n]} A_{w,\ell,i}$. By definition of $A_{w,\ell,i}$ we have $V\subseteq V(E_1)\cup L$ and $V\subseteq V(E_2)\cup L$. Since $|E_1|=|E_2|=\frac\ell2$, we have $|V(E_1)|\leq \ell$ and $|V(E_2)|\leq\ell$. Since furthermore $|L|=n-\ell$ and $|V|=n$, it follows that $V(E_1)=V(E_2)=V\setminus L$ and $|V(E_1)|=|V(E_2)|=\ell$. Hence, $E_1$ and $E_2$ are matchings on the same vertex set. Thus, all triplets in $\bigcap_{i\in [2n]} A_{w,\ell,i}$ consist of two such matchings together with the set~$L$ of $n-\ell$ uncovered vertices. Since $w=\weights(E_1,E_2)$, we have $(E_1,E_2)\in\wdpm$. Thus, for all $(E_1,E_2,L)\in\bigcap_{i\in [2n]} A_{w,\ell,i}$ we have $(E_1,E_2)\in\wdpm$. For any two different $(E_1,E_2,L),(E'_1,E'_2,L)\in\bigcap_{i\in [2n]} A_{w,\ell,i}$ we must have $E_1\neq E'_1$ or $E_2\neq E'_2$, or else by the above it follows that $L=L'$. But then such different triplets correspond to different pairs $(E_1,E_2)$ and $(E'_1,E'_2)$ in $\wdpm$, which implies that $\left| \bigcap_{i\in [2n]} A_{w,\ell,i} \right| \leq |\wdpm|$.
 
 Now, let $(M_1,M_2)\in\wdpm$. Let $L:=V\setminus V(M_1)$. Clearly, $L=V\setminus V(M_2)$ and $|L|=n-\ell$. Moreover, for each $i\in V$ we have both $i\in V(M_1)\cup L$ and $i\in V(M_2)\cup L$. Hence $(M_1,M_2,L)\in \bigcap_{i\in [2n]} A_{w,\ell,i}$. Here, for different $(M_1,M_2),(M'_1,M'_2)\in\wdpm$ it is obvious that the corresponding triplets $(M_1,M_2,L)$ and $(M'_1,M'_2,L')$ in $\bigcap_{i\in [2n]} A_{w,\ell,i}$ are different. Hence $|\wdpm|\leq \left| \bigcap_{i\in [2n]} A_{w,\ell,i} \right|$. This completes the proof.
\end{proof}

Using a well-known variation of the inclusion-exclusion principle, expressing the size of the intersection by an alternating sum of sizes of intersections of complements (cf.~\cite{DBLP:books/sp/CyganFKLMPPS15}), we arrive at the following expression for $|\wdpm|$.
\begin{align}
 |\wdpm|= \left| \bigcap_{i\in [2n]} A_{w,\ell,i} \right| = \sum_{I\subseteq[2n]} (-1)^{|I|} \left| \bigcap_{i\in I} \overline{A}_{w,\ell,i} \right|
 \label{math:ie:wdmp}
\end{align}

\section{Computing $\boldsymbol{|\wdpm|}$ for graphs of small treedepth}
\label{section:computingwdpm}

In this section, we show how to compute $|\wdpm|$ in $4^{\tau}\cdot n^{\Oh(1)}$ time and polynomial space when the edge weights are polynomially bounded and given an elimination forest $\T$ of depth $\tau$ for the graph. Our algorithm relies on equation $(\ref{math:ie:wdmp})$ for $|\wdpm|$. Since we do not have time proportional to $2^{2n}$ available, we will evaluate the contributions to the right-hand side while following the structure of $\T$. To this end, we will use polynomials in few indeterminates to conveniently handle different (partial) contributions like in previous work~\cite{DBLP:conf/stacs/HegerfeldK20,DBLP:journals/siamdm/NederlofPSW23}. Zoomed out, it looks a bit like the algorithm of Nederlof et al.~\cite{DBLP:journals/siamdm/NederlofPSW23} but we use inclusion-exclusion rather than cut-and-count for the main computation.

Throughout this section, let $G=(V,E)$ be a connected graph with vertex set $V=\{1,\ldots,n\}$, let $\weights\colon E\to \{1,\ldots,N\}$ with $N=n^{\Oh(1)}$, and let $w,\ell\in\naturals$ with $\ell$ even and $\ell\leq n$. Moreover, let $\T=(V,E_\T)$ be an elimination tree of $G$ of depth $\tau$ and with root $r$. We show how to compute $|\wdpm|$ in time $4^{\tau}\cdot n^{\Oh(1)}$ and polynomial space. It is straightforward to then use this approach to compute $|\wdpm|$ for disconnected graphs. (If so desired, the algorithm works for larger weights at the cost of a factor $N$ appearing in the running time.)

\subparagraph*{Additional notation.}
Some additional notation will be helpful to handle the fact that in equation $(\ref{math:ie:wdmp})$ there are two possible requirements for each vertex. Accordingly, for $W\subseteq V$ we let $W^+:=\{v,v+n \mid v\in W\}$. We also extend this to $\tail(v)$ etc., e.g., $\tail^+(v):=\{u,u+n \mid u\in\tail(v)\}$. Furthermore, we define $\tuple$, $\monomial$, and $\disjoint$, which are short for tuple, monomial, and disjoint:
\begin{align*}
 \tuple(W) &:= \{(E_1,E_2,L) \mid{} E_1\subseteq \delta[W], E_2\subseteq \delta[W], L\subseteq W, E_1\cap E_2 = \emptyset\}\\
 \monomial(E_1,E_2,L) &: =x^{|E_1|}y^{|E_2|}z^{|L|}\omega^{\weights(E_1\cup E_2)}\\
 \disjoint(I) &:= \{ (E_1,E_2,L) \mid{} \forall \{s,t\}\in E_1: \{s,t\}\cap I=\emptyset,\\
 & \hspace{2.7cm} \forall \{s,t\}\in E_2: \{s+n,t+n\}\cap I=\emptyset, L^+\cap I=\emptyset\}
\end{align*}
We will use, e.g., $\tuple(\tree(v))$ to succinctly refer to the possible parts of tuples in $U_{w,\ell}$ whose edges have at least one endpoint in $\tree(v)$; the weight is tracked through the arising polynomials (to be defined in a moment). Using $\monomial(E_1,E_2,L)$ we get a monomial whose degrees track the cardinalities of $E_1$, $E_2$, and $L$ as well as the total weight of edges in $E_1\cup E_2$ (which will always be disjoint). Using $(E_1,E_2,L)\in\disjoint(I)$ we express whether the tuple in question would be eligible for contributing for a given $I\subseteq \broom^+[v]$, which corresponds to a selection of requirements in equation $(\ref{math:ie:wdmp})$. (Note that $I\subseteq\broom^+[v]$ will always be implicit through combination of subsets of $\tail^+(v)$ and $\tree^+[v]$ resp.\ $\tail^+[v]$ and $\tree^+(v)$.)

\subparagraph*{Two polynomials.}
We are now ready to define two types of polynomials that will be used in our branching algorithm. For $v\in V$ and $J\subseteq \tail^+(v)$ let $P_{(v)}(J)\in \integers[x,y,z,\omega]$ with 
\begin{align*}
 P_{(v)}(J)=\sum_{\substack{(E_1,E_2,L)\in\\\tuple(\tree[v])}} \monomial(E_1,E_2,L) \sum_{\substack{I=J\cup K\\K\subseteq\tree^+[v]}}(-1)^{|K|} \iverson{(E_1,E_2,L)\in\disjoint(I)}
\end{align*}
Similarly, for $v\in V$ and $J'\subseteq \tail^+[v]$ let $Q_{[v]}(J')\in\integers[x,y,z,\omega]$ with
\begin{align*}
 Q_{[v]}(J')=\sum_{\substack{(E'_1,E'_2,L')\in\\\tuple(\tree(v))}} \monomial(E'_1,E'_2,L') \sum_{\substack{I'=J'\cup K'\\K'\subseteq\tree^+(v)}}(-1)^{|K'|} \iverson{(E'_1,E'_2,L')\in\disjoint(I')}
\end{align*}
Intuitively, these capture the contributions of partial tuples based on $\tree[v]$ resp.\ $\tree(v)$. The inner sums mimic the shape of equation $(\ref{math:ie:wdmp})$. The difference between the two types is that $P_{(v)}(J)$ uses $J\subseteq\tail^+(v)$, so does not specify a choice of $v$ and/or $v+n$ to be included in $J$, whereas this is included in $Q_{[v]}(J)$ with $J\subseteq\tail^+[v]$. It is crucial that the edges with exactly one endpoint in $\tree[v]$ resp.\ $\tree(v)$ have their other endpoint in $\tail(v)$ resp.\ $\tail[v]$ so that consistency with $I\subseteq\broom^+[v]$ can be checked.

We now establish recurrences and base cases that hold for these polynomials. We begin with the base case of $Q_{[v]}(J')$ for leaf nodes $v$ of $\T$. Then we prove recurrences that relate $Q_{[v]}(J)$ to $P_{(u_1)}(J),\ldots,P_{(u_o)}(J)$ for internal nodes $v$ with $\children(v)=\{u_1,\ldots,u_o\}$, and for relating $P_{(v)}(J)$ to $Q_{[v]}(J')$ (for arbitrary nodes). We then show how to get $|\wdpm|$ from $P_{(r)}(\emptyset)$. When spelling out the algorithm, it is crucial to work directly with the coefficients of the polynomials, but here their definitions via sums over tuples are more convenient.

\subparagraph*{Base case for $\boldsymbol{Q_{[v]}}$ for leaf nodes.}
Let $v$ be a leaf node of $\T$ and let $J'\subseteq\tail^+[v]$. Since $\tree(v)=\emptyset$ we have $\tuple(\tree(v))=\{(\emptyset,\emptyset,\emptyset)\}$ so the outer sum has only one summand. Similarly, $\tree^+(v)=\emptyset$, so $K'=\emptyset$ is the only choice for the inner sum. Clearly, for the unique choice $(\emptyset,\emptyset,\emptyset)\in\tuple(\tree(v))$ we get the summand $\monomial(\emptyset,\emptyset,\emptyset)=x^0y^0z^0\omega^0=1$ since $(-1)^{|K'|}=1$ and $(\emptyset,\emptyset,\emptyset)\in\disjoint(I')$. Thus, $Q_{[v]}(J')=1$.

\subparagraph*{Recurrence for $\boldsymbol{Q_{[v]}}$ for internal nodes.}
Let $v$ be an internal node of $\T$, let $J'\subseteq\tail^+[v]$, and let $\children(v)=\{u_1,\ldots,u_o\}$. We show that $Q_{[v]}(J')$ is simply the product of the polynomials $P_{(u_i)}(J')$ of the child nodes of $v$. (In hindsight this includes the case of leaf nodes as a special case, but it is good to treat that one explicitly.)

\begin{lemma}
 \label{lemma:qfromp}
 \begin{align*}
  Q_{[v]}(J')=\prod_{i\in[o]} P_{(u_i)}(J').
 \end{align*}
\end{lemma}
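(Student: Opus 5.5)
The plan is a product decomposition: split every tuple and every requirement set along the subtrees $\tree[u_1],\ldots,\tree[u_o]$, which partition $\tree(v)$, and check that each piece of the definition of $Q_{[v]}(J')$ factors accordingly. Note first that $\tail(u_i)=\tail[v]$ for every child $u_i$, so $P_{(u_i)}(J')$ is well defined for $J'\subseteq\tail^+[v]$. The argument has three steps.

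First, I would establish a bijection
\[
\tuple(\tree(v))\;\cong\;\tuple(\tree[u_1])\times\cdots\times\tuple(\tree[u_o]),
\]
given by $(E'_1,E'_2,L')\mapsto\bigl((E'_1\cap\delta[\tree[u_i]],\,E'_2\cap\delta[\tree[u_i]],\,L'\cap\tree[u_i])\bigr)_{i\in[o]}$.
\begin{itemize}
\item The key fact is that $\T$ is an elimination tree. Hence no edge of $G$ joins $\tree[u_i]$ and $\tree[u_j]$ for $i\neq j$, since neither endpoint would be an ancestor of the other.
\item Therefore the sets $\delta[\tree[u_i]]$ partition $\delta[\tree(v)]$. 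Every edge in $\delta[\tree[u_i]]$ has its other endpoint in $\tree[u_i]\cup\tail[v]$.
\item Disjointness $E'_1\cap E'_2=\emptyset$ holds globally if and only if it holds in each part. The inverse map is componentwise union.
\item Under this bijection the monomial is multiplicative: $\monomial(E'_1,E'_2,L')=\prod_i\monomial(\text{part}_i)$, because cardinalities and weights are additive over disjoint unions.
\end{itemize}

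Second, I would decompose the inner sum. Each $K'\subseteq\tree^+(v)$ corresponds uniquely to a tuple $(K_1,\ldots,K_o)$ with $K_i=K'\cap\tree^+[u_i]$, and $(-1)^{|K'|}=\prod_i(-1)^{|K_i|}$. The heart of the argument is the factorization of the indicator:
\[
\iverson{(E'_1,E'_2,L')\in\disjoint(J'\cup K')}=\prod_{i\in[o]}\iverson{\text{part}_i\in\disjoint(J'\cup K_i)}.
\]
To verify it, take an edge $\{s,t\}$ in part $i$. Both of its endpoints lie in $\tree[u_i]\cup\tail[v]$, so its intersection with $J'\cup K'$ equals its intersection with $J'\cup K_i$. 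The same holds for the shifted copies $s+n,t+n$. The condition on $L'$ splits likewise, since $(L'\cap\tree[u_i])^+\subseteq\tree^+[u_i]$ can only meet $K_i$ and not $J'$, as $J'\subseteq\tail^+[v]$ is disjoint from $\tree^+(v)$.

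Third, with all three ingredients factored (tuple set, sign, indicator), the double sum becomes a sum over independent choices indexed by $i$ of products. Distributivity then gives
\[
Q_{[v]}(J')=\prod_{i\in[o]}\;\sum_{\tuple(\tree[u_i])}\monomial\sum_{K_i\subseteq\tree^+[u_i]}(-1)^{|K_i|}\,\iverson{\cdot\in\disjoint(J'\cup K_i)}=\prod_{i\in[o]}P_{(u_i)}(J').
\]

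I expect the main obstacle to be the indicator factorization in the second step, namely checking that no edge of part $i$ interacts with $K_j$ for $j\neq i$. This relies entirely on the elimination tree property that edges leaving $\tree[u_i]$ only go to $\tail[v]$. The rest is routine bookkeeping.
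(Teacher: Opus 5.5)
Your proposal is correct and takes essentially the same route as the paper. Both arguments use the partition of $\tree(v)$, $\delta[\tree(v)]$ and $\tree^+(v)$ along the children's subtrees to get bijections on tuples and on the sets $K'$, use multiplicativity of monomials and signs, and factor the $\disjoint$ indicator via the elimination-tree property. The only difference is direction: the paper expands $\prod_i P_{(u_i)}(J')$ and merges the sums, while you start from $Q_{[v]}(J')$ and factor by distributivity, which changes nothing.
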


\begin{proof}
 We start from the right-hand side of the equation and as a first step plug in the definition of $P_{(u_i)}(J')$:
 \begin{align*}
      & \prod_{i\in[o]} P_{(u_i)}(J')\\
  ={} & \prod_{i\in[o]} \sum_{\substack{(E^i_1,E^i_2,L^i)\in\\\tuple(\tree[u_i])}} \monomial(E^i_1,E^i_2,L^i) \sum_{\substack{I^i=J'\cup K^i\\K^i\subseteq\tree^+[u_i]}}(-1)^{|K^i|} \iverson{(E^i_1,E^i_2,L^i)\in\disjoint(I^i)}
 \intertext{We rewrite the expression by expanding out the product over $i\in[o]$:}
  ={} & \sum_{\substack{(E^1_1,E^1_2,L^1)\in\tuple(\tree[u_1])\vspace{-1.8mm}\\ \vdots\\(E^o_1,E^o_2,L^o)\in\tuple(\tree[u_o])}} \monomial(E^1_1,E^1_2,L^1)\cdot\ldots\cdot\monomial(E^o_1,E^o_2,L^o)\\
  & \quad \cdot \sum_{\substack{I^1=J'\cup K^i\\K^1\subseteq\tree^+[u_1]\vspace{-1.8mm}\\\vdots\\I^o=J'\cup K^o\\K^o\subseteq\tree^+[u_o]}}(-1)^{|K^1|+\ldots+|K^o|} \iverson{(E^1_1,E^1_2,L^1)\in\disjoint(I^1)}\ldots\iverson{(E^o_1,E^o_2,L^o)\in\disjoint(I^o)}
 \end{align*}
 Note that the sets $\tree^+[u_i]$, with $i\in[o]$, are a partition of $\tree^+(v)$ since $\children(v)=\{u_1,\ldots,u_o\}$. We use this along with the structure of an elimination tree and the definitions of $\tuple(\tree[u_i])$ and $\disjoint(I^i)$ to see that we arrive at the definition of $Q_{[v]}(J')$.
 
 We begin with replacing the inner sum by the summation over $I'=J'\cup K'$ with $K'\subseteq\tree^+(v)$ as it appears in $Q_{[v]}(J')$. The following arguments show that this is correct:
 \begin{itemize}
  \item Intuitively, we replace $K^1,\ldots,K^o$ with $K^i\subseteq\tree^+[u_i]$ by $K'=K^1\cup\ldots\cup K^o\subseteq\tree^+(v)$. Since the sets $\tree^+[u_i]$ are a partition of $\tree^+(v)$ this is one-to-one. Moreover, we have $(-1)^{|K^1|+\ldots+|K^o|}=(-1)^{|K'|}$.
  \item We replace $\iverson{(E^1_1,E^1_2,L^1)\in\disjoint(I^1)}\ldots\iverson{(E^o_1,E^o_2,L^o)\in\disjoint(I^o)}$, which takes value $1$ exactly when all $o$ conditions are fulfilled (and zero otherwise) by 
  \[
   \iverson{(E^1_1\cup\ldots\cup E^o_1,E^1_2\cup\ldots\cup E^o_2,L^1\cup\ldots\cup L^o)\in\disjoint(I')}
  \]
  where, as intended, $I'=J'\cup K'=J'\cup K^1\cup\ldots\cup K^o$. We check that these two terms give the same value, which is equivalent to checking that the conditions in brackets hold if and only if (recall that, in general, $\iverson{p\wedge q}=\iverson{p}\iverson{q}$):
  \begin{itemize}
   \item Assume first that $(E^1_1\cup\ldots\cup E^o_1,E^1_2\cup\ldots\cup E^o_2,L^1\cup\ldots\cup L^o)\in\disjoint(I')$ holds. By definition, this requires that $\forall\{s,t\}\in E^1_1\cup\ldots\cup E^o_1: \{s,t\}\cap I'=\emptyset$. Since $E^i_1\subseteq E^1_1\cup\ldots\cup E^o_1$ and $I^i=J'\cup K^i\subseteq J'\cup K'=I'$, it follows that $\forall\{s,t\}\in E^i_1: \{s,t\}\cap I^i=\emptyset$. The argument for $\forall\{s,t\}\in E^i_2:\{s+n,t+n\}\cap I^i=\emptyset$ is analogous. Finally, it is also required that $(L^1\cup\ldots\cup L^o)^+\cap I'=\emptyset$. Since $L^i\subseteq L^1\cup\ldots\cup L^o$ and $I^i=J'\cup K^i\subseteq J'\cup K'=I'$, it follows that $(L^i)^+\cap I^i=\emptyset$. Thus, $(E^i_1,E^i_2,L^i)\in\disjoint(I^i)$ for all $i\in[o]$.
   \item Conversely, assume that $(E^i_1,E^i_2,L^i)\in\disjoint(I^i)$ for all $i\in[o]$. Here it is crucial that the sets $\tree[u_i]$ for $i\in[o]$ are a partition of $\tree(v)$, and ditto for using $\tree^+$. 
   
   (1) We know that $\forall\{s,t\}\in E^i_1: \{s,t\}\cap I^i=\emptyset$; recall that $I^i=J'\cup K^i$. Let $j\neq i$ and let $\{s,t\}\in E^i_1$. Then $\{s,t\}\in\delta[\tree[u_o]]$ since $(E^i_1,E^i_2,L^i)\in\tuple(\tree[u_i])$. W.l.o.g., let $s\in\tree[u_i]$. Now $t\in\tree[u_i]$ too or $t\notin\tree[u_i]$. In the latter case, since $\T$ is an elimination $\T$ and $t$ cannot be a descendant of $s$ (or else $t\in\tree[s]\subseteq\tree[u_i]$), node $t$ must be an ancestor of $s$; then $t\in\tail(s)\setminus\tree[u_i]=\tail(u_i)=\tail[v]$. Either way, we have $K^j\subseteq\tree^+[u_j]$, so $K^j$ is disjoint from $\tree^+[u_i]\supseteq\tree[u_i$. It is also disjoint from $\tail(u_j)=\tail[v]$, so it contains neither $s$ nor $t$. Thus, $\forall\{s,t\}\in E^i_1\forall j\neq i: \{s,t\}\cap K^j=\emptyset$. It follows that $\forall\{s,t\}\in E^i_1: \{s,t\}\cap I'=\emptyset$, since $I'=J'\cup K^1\cup\ldots\cup K^o$. And then this holds for all $\{s,t\}\in E^1_1\cup\ldots\cup E^o_i$.
   
   (2) The argument for getting $\forall\{s,t\}\in E^1_2\cup\ldots\cup E^o_2: \{s+n,t+n\}\cap I'=\emptyset$ is fully analogous. 
   
   (3) We know that $(L^i)^+\cap I^i=\emptyset$; recall that $I^i=J'\cup K^i$. Since $(E^i_1,E^i_2,L^i)\in\tuple(\tree[u_i])$, we have $L^i\subseteq\tree[u_i]$ and, hence, $(L^i)^+\subseteq\tree^+[u_i]$. Now let $j\neq i$. Then $K^j\subseteq\tree^+[u_j]$ hence $(L^i)^+\cap K_j\subseteq \tree^+[u_i]\cap\tree^+[u_j]=\emptyset$, as the sets $\tree^+[u_i]$ are a partition of $\tree^+(v)$. It follows that $(L^i)^+\cap I'= (L^i)^+\cap(J'\cup K_1\cup\ldots\cup K^o)=\emptyset$. Hence $(L^1\cup\ldots\cup L^o)^+ \cap I'=((L^1)^+\cup\ldots\cup (L^o)^)+\cap I'=\emptyset$.
   
   Thus $(E^1_1\cup\ldots\cup E^o_1,E^1_2\cup\ldots\cup E^o_2,L^1\cup\ldots\cup L^o)\in\disjoint(I')$.
  \end{itemize}
 \end{itemize}
 Applying this replacement we arrive at
 \begin{align*}
  ={} & \sum_{\substack{(E^1_1,E^1_2,L^1)\in\tuple(\tree[u_1])\vspace{-1.8mm}\\ \vdots\\(E^o_1,E^o_2,L^o)\in\tuple(\tree[u_o])}} \monomial(E^1_1,E^1_2,L^1)\cdot\ldots\cdot\monomial(E^o_1,E^o_2,L^o)\\
  & \quad \cdot \sum_{\substack{I'=J'\cup K'\\K'\subseteq\tree^+(v)}}(-1)^{|K'|} \iverson{(E^1_1\cup\ldots\cup E^o_1,E^1_2\cup\ldots\cup E^o_2,L^1\cup\ldots\cup L^o)\in\disjoint(I')}
 \end{align*}
 Now we work on the outer summation. Since the inner sum only depends on, e.g., $E^1_1\cup\ldots\cup E^o_1$ but not on the separate sets, we can let the outer summation go over $(E_1,E_2,L)\in\tuple(v)$. The following arguments show that this is correct:
 \begin{itemize}
  \item The sets $\tree[u_1],\ldots,\tree[u_o]$ are a partition of $\tree(v)$ since $\children(v)=\{u_1,\ldots,u_o\}$. Moreover, there are no edges with endpoints in any two different sets $\tree[u_i]$ and $\tree[u_j]$, with $i\neq j$, by definition of an elimination tree. Thus, the sets $\delta[\tree[u_1]],\ldots,\delta[\tree[u_o]]$ are a partition of $\delta[\tree(v)]$.
  \item First, this gives a canonical bijection between choices of $o$ sets $L^1\subseteq\tree[u_1],\ldots,L^o\subseteq\tree[u_o]$ and single subsets $L\subseteq\tree(v)$: Take union from former to latter, take projections $\tree[u_1],\ldots,\tree[u_o]$ from letter to former.
  \item Second, we get a similar bijection between choice of $o$ pairs of disjoint subsets $E^1_1,E^1_2\subseteq\delta[\tree[u_1]],\ldots E^o_1,E^o_2\subseteq\delta[\tree[u_o]]$ and single pairs of disjoint subsets $E_1,E_2\subseteq\delta[\tree(v)]$; union one way, projections the other.
  \item Together, we get a bijection between choices of $o$ tuples $(E^1_1,E^1_2,L^1)\in\tuple(\tree[u_1]),\ldots,(E^o_1,E^o_2,L^o)\in\tuple(\tree[u_o])$ and single tuples $(E_1,E_2,L)\in\tuple(\tree(v))$. Thus, we get the same number of summands whether we go by the $o$ tuples or by the single tuple. It remains to check that the values are the same along our bijection.
  \item Compare $\monomial(E^1_1,E^1_2,L^1)\cdot\ldots\cdot\monomial(E^o_1,E^o_2,L^o)$ and $\monomial(E_1,E_2,L)$ for corresponding choices. It is easy to see that the product of $o$ monomials on the left is exactly the single monomial on the right. E.g., we have that $E^1_1,\ldots,E^o_1$ are a partition of $E_1$, so either way we arrive at $x^{|E_1|}$ in the resulting monomial.
  \item In the inner sum, it is clear that nothing changes when we replace, e.g., $E^1_1\cup\ldots\cup E^o_1$ by $E_1$, since these are identical (along our bijection).
 \end{itemize}
 With this second replacement we arrive at the definition of $Q_{[v]}(J')$, as desired.
 \begin{align*} 
  ={} & \sum_{\substack{(E_1,E_2,L)\in\\\tuple(\tree(v))}} \monomial(E_1,E_2,L) \sum_{\substack{I'=J'\cup K'\\K'\subseteq\tree^+(v)}}(-1)^{|K'|} \iverson{(E_1,E_2,L)\in\disjoint(I)}\\
  ={} & Q_{[v]}(J')
 \end{align*}
 This completes the proof.
\end{proof}

\subparagraph*{Recurrence for $\boldsymbol{P_{(v)}}$.}
Now let $v$ be an arbitrary node of $\T$ and let $J\subseteq\tail^+(v)$. We show that $P_{(v)}(J)$ can be expressed using sums and products of polynomials, mainly certain $Q_{[v]}(J')$. Intuitively, going from $Q_{[v]}(J')$ to $P_{(v)}(J)$ drops the fixed behavior of $v$ (as given by $J'\subseteq\tail^+[v]$) and brings into scope the edges incident with $v$ that were not previously considered, i.e., the set $E(v,\tail(v))$. In the recurrence, the contributions of these edges, along with the contribution for $v$ and $v+n$, have to be accounted for. Intuitively, factors such as $(1+z\iverson{J_v=\emptyset})$ correspond to the option of adding, e.g., vertex $v$ to a tuple: If not added, then this leaves the corresponding monomials unchanged (factor $1$). If added (which requires some condition in $\iverson{}$ to hold, in this case $J_v=\emptyset$) then this also creates modified monomials corresponding to this (factor $z$ in this case). Similar factors correspond to the edges $e\in E(v,\tail(v))$.

\begin{lemma}
 \label{lemma:pfromq}
 \begin{align*}
  P_{(v)}(J)={} & \sum_{\substack{J'=J\cup J_v\\J_v\subseteq\{v,v+n\}}} (-1)^{|J_v|} \cdot Q_{[v]}(J') \cdot (1 + z \iverson{J_v=\emptyset})\\
  & \cdot \prod_{\substack{e=\{u,v\}\\ e\in E(v,\tail(v))}} 1+x\omega^{\weights(e)}\iverson{\{u,v\}\cap J'=\emptyset}+y\omega^{\weights(e)}\iverson{\{u+n,v+n\}\cap J'=\emptyset}
 \end{align*}
\end{lemma}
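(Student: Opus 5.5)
We prove the identity by expanding the definition of $P_{(v)}(J)$ and splitting each tuple according to where its parts live, mirroring the proof of Lemma~\ref{lemma:qfromp}.

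\emph{Splitting tuples and the inclusion-exclusion set.} Since $\tree[v]=\{v\}\cup\tree(v)$, every edge of $\delta[\tree[v]]$ either lies in $\delta[\tree(v)]$ or is an edge $\{u,v\}$ with $u\in\tail(v)$. The latter follows from the elimination forest property, since a non-tail neighbour of $v$ lies in $\tree(v)$. Hence each $(E_1,E_2,L)\in\tuple(\tree[v])$ decomposes uniquely into:
\begin{itemize}
\item a tuple $(E'_1,E'_2,L')\in\tuple(\tree(v))$;
\item a choice of whether $v\in L$;
\item for each $e\in E(v,\tail(v))$, one of three options: $e\notin E_1\cup E_2$, $e\in E_1$, or $e\in E_2$ (disjointness forbids both).
\end{itemize}
Likewise, $\tree^+[v]=\{v,v+n\}\cup\tree^+(v)$, so each $K\subseteq\tree^+[v]$ splits uniquely as $K=J_v\cup K'$ with $J_v\subseteq\{v,v+n\}$ and $K'\subseteq\tree^+(v)$. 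Setting $J'=J\cup J_v\subseteq\tail^+[v]$ gives $I=J'\cup K'$ and $(-1)^{|K|}=(-1)^{|J_v|}(-1)^{|K'|}$. The monomial factors multiplicatively:
\begin{itemize}
\item $\monomial(E'_1,E'_2,L')$ for the subtree part;
\item $z$ if $v\in L$;
\item $x\omega^{\weights(e)}$ or $y\omega^{\weights(e)}$ for each edge $e$ put into $E_1$ or $E_2$.
\end{itemize}

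\emph{Factoring the disjointness bracket.} This is the main step. We show that $\iverson{(E_1,E_2,L)\in\disjoint(J'\cup K')}$ equals the product of four brackets:
\begin{itemize}
\item $\iverson{(E'_1,E'_2,L')\in\disjoint(J'\cup K')}$;
\item $\iverson{J_v=\emptyset}$ if $v\in L$;
\item $\iverson{\{u,v\}\cap J'=\emptyset}$ for each tail edge $\{u,v\}$ placed in $E_1$;
\item $\iverson{\{u+n,v+n\}\cap J'=\emptyset}$ for each tail edge placed in $E_2$.
\end{itemize}
The justification rests on three observations.
\begin{itemize}
\item For $v\in L$, the condition $\{v,v+n\}\cap I=\emptyset$ reduces to $J_v=\emptyset$. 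This holds because $J\subseteq\tail^+(v)$ and $K'\subseteq\tree^+(v)$ avoid $v$ and $v+n$.
\item For a tail edge $\{u,v\}$, both endpoints lie in $\tail[v]$, so $K'\subseteq\tree^+(v)$ cannot meet $\{u,v\}$ or $\{u+n,v+n\}$. Only $J'$ matters.
\item The constraints on $(E'_1,E'_2,L')$ are unchanged, because they only involve the set $I$, not the other components.
\end{itemize}

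\emph{Regrouping the sum.} After the factorization, we sum over $J_v$ on the outside. For fixed $J_v$, the sum over $(E'_1,E'_2,L')$ and $K'$ is exactly $Q_{[v]}(J')$. The choice for $v$ contributes the factor $(1+z\iverson{J_v=\emptyset})$. Each tail edge contributes an independent factor with three terms, matching the three options above. Distributivity then yields the stated formula.

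The only delicate point is the factorization of the bracket. Everything else is bookkeeping via the two bijections above, exactly as in Lemma~\ref{lemma:qfromp}.
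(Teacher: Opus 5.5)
Your proposal is correct and follows essentially the same route as the paper. Both arguments split $K\subseteq\tree^+[v]$ as $J_v\cup K'$ and split tuples in $\tuple(\tree[v])$ into a $\tuple(\tree(v))$-part plus the choices at $v$ and its tail edges. Both then factor the disjointness bracket using that $v$, $v+n$ and the tail edges avoid $\tree^+(v)$ and $J$. The paper simply runs the computation in the other direction: it expands the product over $E(v,\tail(v))$ as a separate claim and merges toward $P_{(v)}(J)$, while you start from $P_{(v)}(J)$ and regroup by distributivity.
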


\begin{proof}
 We first show how to rewrite part of the right-hand side using $\monomial$ and $\disjoint$.

 \begin{claim}
  For $J_v\subseteq\{v,v+n\}$ and $J'=J\cup J_v$ it holds that
  \begin{align*}
   & (1 + z \iverson{J_v=\emptyset}) \smash[b]{\prod_{\substack{e=\{u,v\}\\ e\in E(v,\tail(v))}}} 1+x\omega^{\weights(e)}\iverson{\{u,v\}\cap J'=\emptyset}\\
   & \hspace{7cm} +y\omega^{\weights(e)}\iverson{\{u+n,v+n\}\cap J'=\emptyset}\\[0.5em]
   ={} & \sum_{\substack{\hat{E}_1\subseteq E(v,\tail(v))\\\hat{E}_2\subseteq E(v,\tail(v))\\\hat{L}\subseteq\{v\}\\\hat{E}_1\cap\hat{E}_2=\emptyset}} \monomial(\hat{E}_1,\hat{E}_2,\hat{L}) \cdot \iverson{(\hat{E}_1,\hat{E}_2,\hat{L}) \in \disjoint(J')}.
  \end{align*}
 \end{claim}

 \begin{claimproof}
  Intuitively, the RHS of the claimed equation is obtained by expanding out the product on the LHS but without gathering multiple copies of the same monomial $x^ay^bz^c\omega^d$ that may arise from $\monomial(\hat{E}_1,\hat{E}_2,\hat{L})$ over all choices of $\hat{E}_1$, $\hat{E}_2$, and $\hat{L}$ that fulfill $(\hat{E}_1,\hat{E}_2,\hat{L})\in\disjoint(J')$. Instead, there are separate summands for all combinations of choices.
  
  $LHS\leq RHS$: Consider a single occurrence of a monomial $x^ay^bz^c\omega^d$ in the expansion of the product on the LHS after dropping summands whose condition, e.g., $\{u,v\}\cap J'=\emptyset$, is not fulfilled (so they get factor $\iverson{\{u,v\}\cap J'=\emptyset}=0$). Clearly $c\in\{0,1\}$. Let $\hat{L}=\{v\}$ if $c=1$ and $\hat{L}=\emptyset$ otherwise (if $c=0$); either way, $\hat{L}\subseteq\{v\}$. Observe that we can only have $c=1$ if $J_v=\emptyset$ holds, or else $\iverson{J_v=\emptyset}=0$. Let $\hat{E}_1\subseteq E(v,\tail(v))$ contain those edges $e\in E(v,\tail(v))$ where $x\omega^{\weights(e)}$ was contributed to the monomial, which requires that $\{u,v\}\cap J'=\emptyset$ for these edges. Similarly, let $\hat{E}_2\subseteq E(v,\tail(v)$ contain those edges $e\in E(v,\tail(v))$ where $y\omega^{\weights(e)}$ was contributed to the monomial, which requires that $\{u+n,v+n\}\cap J'=\emptyset$ for these edges. Clearly, $\hat{E}_1\cap\hat{E}_2=\emptyset$ since for each $e\in E(v,\tail(v))$ only one choice out of $1$, $x\omega^{\weights(e)}$, and $y\omega^{\weights(e)}$ contributes to the monomial in the expansion of the product. Furthermore, since these are all possible contributions, we then have $x^ay^bz^c\omega^d=x^{|\hat{E}_1|}y^{|\hat{E}_2|}z^{|\hat{L}|}\omega^{\weights(\hat{E}_1\cup\hat{E}_2)}=\monomial(\hat{E}_1,\hat{E}_2,\hat{L})$. It remains to verify that $(\hat{E}_1,\hat{E}_2,\hat{L}) \in \disjoint(J')$, so that $\iverson{(\hat{E}_1,\hat{E}_2,\hat{L}) \in \disjoint(J')}=1$ and we get the same contribution of $\monomial(\hat{E}_1,\hat{E}_2,\hat{L})$ on the RHS (for this specific choice of $\hat{E}_1$, $\hat{E}_2$, and $\hat{L}$). The following conditions are required (and showed to hold) in order for $(\hat{E}_1,\hat{E}_2,\hat{L}) \in \disjoint(J')$; recall that $J'=J\cup J_v$ and $J\subseteq\tail^+(v)$:
  \begin{itemize}
   \item $\forall\{s,t\}\in\hat{E}_1: \{s,t\}\cap J'=\emptyset$: Since $\hat{E}_1\subseteq E(v,\tail(v))$, let $\{s,t\}=\{u,v\}$.  For all these edges we already observed that $\{u,v\}\cap J'=\emptyset$ for all edges in $\hat{E}_1$.
   \item $\forall\{s,t\}\in\hat{E}_2: \{s+n,t+n\}\cap J'=\emptyset$: Similarly, let $\{s,t\}=\{u,v\}$. We similarly already observed that $\{u+n,v+n\}\cap J'=\emptyset$ for all edges in $\hat{E}_2$.
   \item $\hat{L}^+\cap J'=\emptyset$: Since $\hat{L}\subseteq\{v\}$ we have $\hat{L}^+\subseteq\{v,v+n\}$. Hence, $\hat{L}^+\cap J'=\hat{L}^+\cap J_v$. Recall that we can only have $c=1$ and $\hat{L}=\{v\}$ when $J_v=\emptyset$; in this case $\hat{L}^+\cap J_v=\emptyset$. Otherwise, we have $\hat{L}=\emptyset$ and, hence, $\hat{L}^+=\emptyset$, so again $\hat{L}^+\cap J_v=\emptyset$.
  \end{itemize}
  We see that for each single occurrence of any monomial in the expansion of the product on the LHS there is a unique (and private) choice of $\hat{E}_1$, $\hat{E}_2$, and $\hat{L}$ that contributes the same monomial as a summand to the RHS. Since there are no negative terms on the RHS it follows that $LHS\leq RHS$, as claimed.
  
  $RHS\leq LHS$: Consider a nonzero summand on the RHS and let $\hat{E}_1,\hat{E}_2\subseteq E(v,\tail(v))$ with $\hat{E}_1\cap\hat{E}_2$ and $\hat{L}\subseteq\{v\}$ be the corresponding choices. We have $(\hat{E}_1,\hat{E}_2,\hat{L}) \in \disjoint(J')$ (as required for a nonzero summand). The summand hence is 
  \[
   \monomial(\hat{E}_1,\hat{E}_2,\hat{L})=x^{|\hat{E}_1|}y^{|\hat{E}_2|}z^{|\hat{L}|}\omega^{\weights(\hat{E}_1\cup\hat{E}_2)}.
  \]
  Using $(\hat{E}_1,\hat{E}_2,\hat{L}) \in \disjoint(J')$ we show that there is the same monomial occurring in the expansion of the product of the LHS for a combination of choices that is unique to $(\hat{E}_1,\hat{E}_2,\hat{L})$:
  \begin{itemize}
   \item From $(1+z\iverson{J_v=\emptyset})$ choose $1$ if $\hat{L}=\emptyset$ and choose $z$ if $\hat{L}=\{v\}$. In the latter case, we indeed have $\iverson{J_v=\emptyset}=1$: From $(\hat{E}_1,\hat{E}_2,\hat{L}) \in \disjoint(J')$ we get $\emptyset=\hat{L}^+\cap J'\supseteq \{v,v+n\}\cap J_v$, which is only possible when $J_v=\emptyset$ (as $J_v\subseteq\{v,v+n\}$). In both cases, this matches the contribution of $\hat{L}$ to $\monomial(\hat{E}_1,\hat{E}_2,\hat{L})=x^{|\hat{E}_1|}y^{|\hat{E}_2|}z^{|\hat{L}|}\omega^{\weights(\hat{E}_1\cup\hat{E}_2)}$, namely $z^{|\hat{L}|}$.
   \item Now consider the factor $1+x\omega^{\weights(e)}\iverson{\{u,v\}\cap J'=\emptyset}+y\omega^{\weights(e)}\iverson{\{u+n,v+n\}\cap J'=\emptyset}$ for some edge $e=\{u,v\}\in E(v,\tail(v)$. We choose depending on which, if any, of $\hat{E}_1$ and $\hat{E}_2$ contains $e$; recall that $\hat{E}_1\cap\hat{E}_2=\emptyset$:
   \begin{itemize}
    \item If $e\notin\hat{E}_1$ and $e\notin\hat{E}_2$ then choose $1$. This matches the (trivial) contribution of $e$ to $\monomial(\hat{E}_1,\hat{E}_2,\hat{L})=x^{|\hat{E}_1|}y^{|\hat{E}_2|}z^{|\hat{L}|}\omega^{\weights(\hat{E}_1\cup\hat{E}_2)}$.
    \item If $e\in\hat{E}_1$ and $e\notin\hat{E}_2$ then choose $x\omega^{\weights(e)}$. Let us check that $\iverson{\{u,v\}\cap J'=\emptyset}=1$: From $(\hat{E}_1,\hat{E}_2,\hat{L}) \in \disjoint(J')$ we get $\forall\{s,t\}\in\hat{E}_1: \{s,t\}\cap J'=\emptyset$, so indeed $\{u,v\}\cap J'=\emptyset$. Again, $x\omega^{\weights(e)}$ matches the contribution of $e$ to $\monomial(\hat{E}_1,\hat{E}_2,\hat{L})$. 
    \item If $e\notin\hat{E}_1$ and $e\in\hat{E}_2$ then choose $y\omega^{\weights(e)}$. Let us check that $\iverson{\{u+n,v+n\}\cap J'=\emptyset}=1$: From $(\hat{E}_1,\hat{E}_2,\hat{L}) \in \disjoint(J')$ we get $\forall\{s,t\}\in\hat{E}_2: \{s+n,t+n\}\cap J'=\emptyset$, so indeed $\{u+n,v+n\}\cap J'=\emptyset$. Also here, $y\omega^{\weights(e)}$ matches the contribution of $e$ to $\monomial(\hat{E}_1,\hat{E}_2,\hat{L})$. 
    \item It is not possible that $e\in\hat{E}_1$ and $e\in\hat{E}_2$.
   \end{itemize}
  \end{itemize}
  Clearly, if we consider two tuples $(\hat{E}_1,\hat{E}_2,\hat{L}) \neq (\hat{E}'_1,\hat{E}'_2,\hat{L}')$ corresponding to summands on the RHS then above we make a different choice from $(1+z\iverson{J_v=\emptyset})$, if $\hat{L}\neq\hat{L}'$, or from $1+x\omega^{\weights(e)}\iverson{\{u,v\}\cap J'=\emptyset}+y\omega^{\weights(e)}\iverson{\{u+n,v+n\}\cap J'=\emptyset}$, for all $e$ with $e\in\hat{E}_1\triangle\hat{E}'_1$ and/or $e\in\hat{E}_2\triangle\hat{E}'_2$. (This does not exclude getting the same monomial, but ensures that each summand corresponds to a unique term in the expansion.)
  
  Overall, for each nonzero summand on the RHS we find the same monomial as a unique (and private) term in the expansion of the product on the LHS. Since there are no negative terms, we get that $RHS\leq LHS$.
  
  Thus, $LHS=RHS$, as claimed.
 \end{claimproof}

 Using the claim, the right-hand side of the equation in the lemma statement can be written more concisely:
 \begin{align*}
  & \sum_{\substack{J'=J\cup J_v\\J_v\subseteq\{v,v+n\}}} (-1)^{|J_v|} \cdot Q_{[v]}(J') \sum_{\substack{\hat{E}_1\subseteq E(v,\tail(v))\\\hat{E}_2\subseteq E(v,\tail(v))\\\hat{L}\subseteq\{v\}\\\hat{E}_1\cap\hat{E}_2=\emptyset}} \monomial(\hat{E}_1,\hat{E}_2,\hat{L}) \cdot \iverson{(\hat{E}_1,\hat{E}_2,\hat{L}) \in \disjoint(J')}
 \end{align*} 
 We plug in the definition of $Q_{[v]}(J')$:
 \begin{align*}
  ={} & \sum_{\substack{J'=J\cup J_v\\J_v\subseteq\{v,v+n\}}} (-1)^{|J_v|} \sum_{\substack{(E'_1,E'_2,L')\in\\\tuple(\tree(v))}} \monomial(E'_1,E'_2,L') \sum_{\substack{I'=J'\cup K'\\K'\subseteq\tree^+(v)}}(-1)^{|K'|} \cdot \iverson{(E'_1,E'_2,L')\in\disjoint(I')}\\
  & \cdot \sum_{\substack{\hat{E}_1\subseteq E(v,\tail(v))\\\hat{E}_2\subseteq E(v,\tail(v))\\\hat{L}\subseteq\{v\}\\\hat{E}_1\cap\hat{E}_2=\emptyset}} \monomial(\hat{E}_1,\hat{E}_2,\hat{L}) \cdot \iverson{(\hat{E}_1,\hat{E}_2,\hat{L})\in\disjoint(J')}
 \end{align*} 
 We reorder summation:
 \begin{align*}
  ={} & \sum_{\substack{(E'_1,E'_2,L')\in\\\tuple(\tree(v))}} \sum_{\substack{\hat{E}_1\subseteq E(v,\tail(v))\\\hat{E}_2\subseteq E(v,\tail(v))\\\hat{L}\subseteq\{v\}\\\hat{E}_1\cap\hat{E}_2=\emptyset}} \monomial(E'_1,E'_2,L') \cdot \monomial(\hat{E}_1,\hat{E}_2,\hat{L}) \\
  & \sum_{\substack{J'=J\cup J_v\\J_v\subseteq\{v,v+n\}}} \sum_{\substack{I'=J'\cup K'\\K'\subseteq\tree^+(v)}} (-1)^{|J_v|} \cdot (-1)^{|K'|} \cdot \iverson{(E'_1,E'_2,L')\in\disjoint(I')} \cdot \iverson{(\hat{E}_1,\hat{E}_2,\hat{L})\in\disjoint(J')}
 \end{align*} 
  We merge the inner two summations and consolidate the products in the summands: First, replace $J_v\subseteq\{v,v+n\}$ and $K'\subseteq\tree^+(v)$ by $K=J_v \cup K'\subseteq\tree^+[v]$, using that $\tree^+[v]=\tree^+(v)\cup\{v,v+n\}$, and replace $(-1)^{|J_v|} \cdot (-1)^{|K'|}$ by $(-1)^{|K|}$. We use $I:=J\cup K$ to reflect the change, but note that $I'=J'\cup K'=J\cup J_v \cup K'=J\cup K=I$. Finally, replace $\iverson{(E'_1,E'_2,L')\in\disjoint(I')} \cdot \iverson{(\hat{E}_1,\hat{E}_2,\hat{L})\in\disjoint(J')}$ by $\iverson{(E'_1\cup\hat{E}_1,E'_2\cup\hat{E}_2,L'\cup\hat{L})\in\disjoint(I)}$, which we justify now:
  \begin{itemize}
   \item Since $I'=I$ and $J'\subseteq I$, we have that $(E'_1\cup\hat{E}_1,E'_2\cup\hat{E}_2,L'\cup\hat{L})\in\disjoint(I)$ implies $(E'_1,E'_2,L')\in\disjoint(I')$ and $(\hat{E}_1,\hat{E}_2,\hat{L})\in\disjoint(J')$.
   \item Conversely, if $(E'_1,E'_2,L')\in\disjoint(I')$ and $(\hat{E}_1,\hat{E}_2,\hat{L})\in\disjoint(J')$ then (with a bit more work) we also get that $(E'_1\cup\hat{E}_1,E'_2\cup\hat{E}_2,L'\cup\hat{L})\in\disjoint(I)$:
   \begin{itemize}
    \item $\forall\{s,t\}\in E'_1\cup\hat{E}_1: \{s,t\}\cap I=\emptyset$: We have this for all edges in $E'_1$ from $(E'_1,E'_2,L')\in\disjoint(I')$ since $I'=I$. Let $\{s,t\}\in \hat{E}_1\subseteq E(v,\tail(v))$; then $s,t\in\tail[v]$ and, hence, $\{s,t\}\cap\tree^+(v)=\emptyset$. Hence, $\{s,t\}\cap I= \{s,t\}\cap (J'\cup K')\subseteq \{s,t\}\cap (J'\cup \tree^+(v))=\{s,t\}\cap J'=\emptyset$, where the final equality holds as $(\hat{E}_1,\hat{E}_2,\hat{L})\in\disjoint(J')$.
    \item $\forall\{s,t\}\in E'_2\cup\hat{E}_2: \{s+n,t+n\}\cap I=\emptyset$: Fully analogous to previous item.
    \item $(L'\cup\hat{L})^+\cap I=\emptyset$: Clearly $(L'\cup\hat{L})^+=L'^+\cup\hat{L}^+$. We have $L'^+\cap I=\emptyset$ since $I=I'$ and $(E'_1,E'_2,L')\in\disjoint(I')$. Furthermore, since $\hat{L}\subseteq\{v\}$ and $K'\subseteq\tree^+(v)$, we have $\hat{L}^+\cap I=\hat{L}^+\cap (J'\cup K')=\hat{L}^+\cap J'=\emptyset$ as $(\hat{E}_1,\hat{E}_2,\hat{L})\in\disjoint(J')$.
   \end{itemize}
  \end{itemize}
  With this replacement we arrive at the following:
 \begin{align*}
  ={} & \sum_{\substack{(E'_1,E'_2,L')\in\\\tuple(\tree(v))}} \sum_{\substack{\hat{E}_1\subseteq E(v,\tail(v))\\\hat{E}_2\subseteq E(v,\tail(v))\\\hat{L}\subseteq\{v\}\\\hat{E}_1\cap\hat{E}_2=\emptyset}} \monomial(E'_1,E'_2,L') \cdot \monomial(\hat{E}_1,\hat{E}_2,\hat{L}) \\
  & \hspace{4cm} \cdot \sum_{\substack{I=J\cup K\\K\subseteq\tree^+[v]}} (-1)^{|K'|} \cdot \iverson{(E'_1\cup\hat{E}_1,E'_2\cup\hat{E}_2,L'\cup\hat{L})\in\disjoint(I)}
 \end{align*}
  
  Finally, we merge the outer two summations and consolidate the product preceding the inner summation. Intuitively, we want to merge $E_1=E'_1\cup\hat{E}_1$, $E_2=E'_2\cup\hat{E}_2$, and $L=L'\cup\hat{L}$, and let the (new) outer summation go over $(E_1,E_2,L)\in\tuple(\tree[v])$. We show that this yields the same summands, hence the same result:
  \begin{itemize}
   \item Let $(E'_1,E'_2,L')\in\tuple(\tree(v))$, which entails $E'_1,E'_2\subseteq\delta[\tree(v)]$, $L'\subseteq\tree(v)$, and $E'_1\cap E'_2=\emptyset$. Further, let $\hat{E}_1\subseteq E(v,\tail(v))$, $\hat{E}_2\subseteq E(v,\tail(v))$, and $\hat{L}\subseteq\{v\}$ with $\hat{E}_1\cap\hat{E}_2=\emptyset$. Using these, let $E_1=E'_1\cup\hat{E}_1$, $E_2=E'_2\cup\hat{E}_2$, and $L=L'\cup\hat{L}$. We show that $(E_1,E_2,L)$ is in $\tuple(\tree[v])$:
   \begin{itemize}
    \item $E_1\subseteq\delta[\tree[v]]$: Since $E'_1\subseteq\delta[\tree(v)]$ and $\hat{E}_1\subseteq E(v,\tail(v))$, it follows that $E_1=E'_1\cup\hat{E}_1\subseteq\delta[\tree[v]]$.
    \item $E_2\subseteq\delta[\tree[v]]$: Analogous.
    \item $L\subseteq\tree[v]$: Since $L'\subseteq\tree(v)$ and $\hat{L}\subseteq\{v\}$, it follows that $L=L'\cup\hat{L}\subseteq\tree[v]$.
    \item $E_1\cap E_2=\emptyset$: Since $\tree(v)\cap\tail[v]=\tree(v)\cap(\{v\}\cup\tail(v))=\emptyset$, we have $\delta[\tree(v)]\cap E(v,\tail(v))=\emptyset$. Thus, $E'_i\cap \hat{E}_j=\emptyset$ for $i,j\in\{1,2\}$. We also have that $E'_1\cap E'_2=\emptyset$ and $\hat{E}_1\cap\hat{E}_2=\emptyset$. Hence, $E_1\cap E_2=(E'_1\cup\hat{E}_1)\cap (E'_2\cup\hat{E}_2)=\emptyset$.
   \end{itemize}
   \item Conversely, let $(E_1,E_2,L)\in\tuple(\tree[v])$. We show that there are partitions $E_1=E'_1\cup\hat{E}_1$, $E_2=E'_2\cup \hat{E}_2$, and $L=L'\cup\hat{L}$ such that $(E'_1,E'_2,L')\in\tuple(\tree(v))$, $\hat{E}_1\subseteq E(v,\tail(v))$, $\hat{E}_2\subseteq E(v,\tail(v))$, $\hat{L}\subseteq\{v\}$, and $\hat{E}_1\cap\hat{E}_2\cap=\emptyset$. Concretely, let $E'_1=E_1\cap \delta[\tree(v)]$ and $\hat{E}_1=E_1\setminus E'_1$. Similarly, let $E'_2=E_2\cap \delta[\tree(v)]$ and $\hat{E}_2=E_2\setminus E'_2$. Finally, let $L'=L\cap \tree(v)$ and $\hat{L}=L\setminus L'$.
   
   We check that $(E'_1,E'_2,L')\in\tuple(\tree(v))$:
   \begin{itemize}
    \item By definition we have $E'_1,E'_2\subseteq\delta[\tree(v)]$ and $L'\subseteq\tree(v)$.
    \item Since $E_1\cap E_2=\emptyset$ and $E'_1\subseteq E_1 $ and $E'_2\subseteq E_2$, we have $E'_1\cap E'_2=\emptyset$.
   \end{itemize}
   
   We check that $\hat{E}_1\subseteq E(v,\tail(v))$, $\hat{E}_2\subseteq E(v,\tail(v))$, $\hat{L}\subseteq\{v\}$, and $\hat{E}_1\cap\hat{E}_2=\emptyset$:
   \begin{itemize}
    \item $\hat{E}_1\subseteq E(v,\tail(v))$: We have $\hat{E}_1= E_1\setminus E'_1=E_1\setminus \delta[\tree(v)]$ and $E_1\subseteq\delta[\tree[v]]$. Thus, $\hat{E}_1$ only contains edges with at least one endpoint in $\tree[v]$ but no endpoint in $\tree(v)$. Hence all edges therein have $v$ as an endpoint and (still) no endpoint in $\tree(v)$. By the definition of an elimination tree, the other endpoint of such edges must be an ancestor of $v$, i.e., must be in $\tail(v)$. Thus, $\hat{E}_1\subseteq E(v,\tail(v))$.
    \item $\hat{E}_2\subseteq E(v,\tail(v))$: Analogous.
    \item $\hat{L}\subseteq\{v\}$: Since $L\subseteq\tree[v]=\tree(v)\cup\{v\}$ and $L'=L\cap\tree(v)$, we have $\hat{L}=L\setminus L'=L\setminus\tree(v)\subseteq \tree[v]\setminus\tree(v)=\{v\}$.
    \item Since $E_1\cap E_2=\emptyset$ and $\hat{E}_1\subseteq E_1 $ and $\hat{E}_2\subseteq E_2$, we have $\hat{E}_1\cap \hat{E}_2=\emptyset$.
   \end{itemize}

   \item It is easy to see that the above gives a bijection between $(E_1,E_2,L)\in\tuple(\tree[v])$ and the combinations of choices in the ranges of the outer two sums. Forward we have a projection of each component into two disjoint sets, e.g., $E_1$ into $E'_1=E_1\cap\delta(\tree(v)\subseteq\delta(\tree(v))$ and $\hat{E}_1=E_1\setminus E'_1\subseteq E(v,\tail(v))$. Backward we take unions of sets that are subsets of disjoint sets, so different pairs of sets cannot have the same union.
   
   \item It remains to check that along this bijection we get the same summands. Clearly the inner sums yield the same because they only depend on, e.g., $E_1=E'_1\cup\hat{E}_1$ not on the separate sets. It remains to consider $\monomial(E'_1,E'_2,L') \cdot \monomial(\hat{E}_1,\hat{E}_2,\hat{L})$, using the disjointness observed above, e.g., $E'_i\cap \hat{E}_j=\emptyset$ for $i,j\in\{1,2\}$:
    \begin{align*}
     \monomial(E'_1,E'_2,L') \cdot \monomial(\hat{E}_1,\hat{E}_2,\hat{L}) &= x^{|E'_1|}y^{|E'_2|}z^{|L'|}\omega^{\weights(E'_1\cup E'_2)} \cdot x^{|\hat{E}_1|} y^{|\hat{E}_2|} z^{|\hat{L}|} \omega^{\weights(\hat{E}_1\cup\hat{E}_2)}\\
     &= x^{|E'_1\cup\hat{E}_1|} y^{|E'_2\cup\hat{E}_2|} z^{|L'\cup\hat{L}|} \omega^{\weights((E'_1\cup\hat{E}_1)\cup(E'_2\cup\hat{E}_2))}\\
     &=\monomial(E'_1\cup\hat{E}_1,E'_2\cup\hat{E}_2,L'\cup\hat{L})\\
     &=\monomial(E_1,E_2,L)
    \end{align*}
   \end{itemize}  
  This completes the argument, and we arrive at the following after making the replacements:
 \begin{align*}
  ={} & \sum_{\substack{(E_1,E_2,L)\in\\\tuple(\tree[v])}} \monomial(E_1,E_2,L) \sum_{\substack{I=J\cup K\\K\subseteq\tree^+[v]}}(-1)^{|K|} \iverson{(E_1,E_2,L)\in\disjoint(I)}\\
  ={} & P_{(v)}(J)
 \end{align*}
 
 This completes the proof.
\end{proof}

\subparagraph*{Getting $\boldsymbol{|\wdpm|}$ from $\boldsymbol{P_{(r)}(\emptyset)}$.}
We now show how to get $|\wdpm|$ from $P_{(r)}(J)$, specifically from $P_{(r)}(\emptyset)$ since that is the only possible choice for $J\subseteq\tail^+(r)=\emptyset$. Essentially, we just need a specific coefficient of $P_{(r)}(\emptyset)$ depending on $w$ and $\ell$.

\begin{lemma}
 \label{lemma:wdpmfrompremptyset}
 Let $P_{(r)}(\emptyset)=\sum_{a,b,c,d} \alpha_{a,b,c,d}\cdot x^ay^bz^c\omega^d$. Then $|\wdpm|=\alpha_{\ell/2,\ell/2,n-\ell,w}$.
\end{lemma}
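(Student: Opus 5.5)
We unfold the definition of $P_{(r)}(\emptyset)$ and compare it with the inclusion-exclusion formula $(\ref{math:ie:wdmp})$. Since $r$ is the root, $\tree[r]=V$ and $\tree^+[r]=[2n]$. Also $\delta[V]=E$. Hence $\tuple(\tree[r])$ consists of all triples $(E_1,E_2,L)$ with $E_1,E_2\subseteq E$ disjoint and $L\subseteq V$, and
\begin{align*}
 P_{(r)}(\emptyset)=\sum_{(E_1,E_2,L)\in\tuple(V)} \monomial(E_1,E_2,L)\sum_{I\subseteq[2n]}(-1)^{|I|}\iverson{(E_1,E_2,L)\in\disjoint(I)}.
\end{align*}
The triple $(E_1,E_2,L)$ contributes only to the monomial $x^{|E_1|}y^{|E_2|}z^{|L|}\omega^{\weights(E_1\cup E_2)}$. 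Therefore the coefficient $\alpha_{\ell/2,\ell/2,n-\ell,w}$ equals
\[
 \sum_{I\subseteq[2n]}(-1)^{|I|}\,\bigl|\{(E_1,E_2,L)\in\tuple(V)\mid |E_1|=|E_2|=\tfrac\ell2,\ |L|=n-\ell,\ \weights(E_1\cup E_2)=w,\ (E_1,E_2,L)\in\disjoint(I)\}\bigr|,
\]
after swapping the two finite sums. The restricted triples are exactly the elements of $U_{w,\ell}$.

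The key step is to show that, for $(E_1,E_2,L)\in U_{w,\ell}$, membership in $\disjoint(I)$ is equivalent to membership in $\bigcap_{i\in I}\overline{A}_{w,\ell,i}$. We check this one requirement at a time.
\begin{itemize}
 \item For $i\in[n]$, the triple is in $\overline{A}_{w,\ell,i}$ iff $i\notin V(E_1)\cup L$. That is, no edge of $E_1$ contains $i$ and $i\notin L$.
 \item For $i\in[n+1,2n]$, the triple is in $\overline{A}_{w,\ell,i}$ iff $i-n\notin V(E_2)\cup L$. That is, $i\notin\{s+n,t+n\}$ for all $\{s,t\}\in E_2$, and $i\notin L+n$.
\end{itemize}
Taking these conditions over all $i\in I$ gives exactly the three conditions in $\disjoint(I)$:
\begin{itemize}
 \item each edge of $E_1$ avoids $I\cap[n]$, and hence avoids $I$, since edges of $E_1$ have endpoints in $[n]$;
 \item each shifted edge of $E_2$ avoids $I$;
 \item $L^+\cap I=\emptyset$, since $L^+$ splits into the part $L\subseteq[n]$ and the part $L+n\subseteq[n+1,2n]$.
\end{itemize}

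With this equivalence, the coefficient becomes $\sum_{I\subseteq[2n]}(-1)^{|I|}\bigl|\bigcap_{i\in I}\overline{A}_{w,\ell,i}\bigr|$. By $(\ref{math:ie:wdmp})$ this equals $|\wdpm|$, where complements are taken in $U_{w,\ell}$ and the empty intersection is $U_{w,\ell}$. The main care point is the bookkeeping of the index shift by $n$ in the equivalence; the rest is direct unfolding of the definitions.
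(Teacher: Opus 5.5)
Your proof is correct and is essentially the paper's argument. Both unfold $P_{(r)}(\emptyset)$ using $\tree[r]=V$ and $\tree^+[r]=[2n]$, identify the triples contributing to $x^{\ell/2}y^{\ell/2}z^{n-\ell}\omega^{w}$ with $U_{w,\ell}$, show that membership in $\disjoint(I)$ coincides with membership in $\bigcap_{i\in I}\overline{A}_{w,\ell,i}$, swap the sums, and invoke (\ref{math:ie:wdmp}); your explicit split of $I$ into $I\cap[n]$ and $I\cap[n+1,2n]$ only makes the index-shift bookkeeping slightly more careful than in the paper.
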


\begin{proof}
 We recall the definition of $P_{(v)}(J)$ for $v\in V$ and $J\subseteq\tail^+(v)$ as
 \begin{align*}
  P_{(v)}(J)=\sum_{\substack{(E_1,E_2,L)\in\\\tuple(\tree[v])}} \monomial(E_1,E_2,L) \sum_{\substack{I=J\cup K\\K\subseteq\tree^+[v]}}(-1)^{|K|} \iverson{(E_1,E_2,L)\in\disjoint(I)}.
 \end{align*}
 Generally for $P_{(v)}(J)$, each $(E_1,E_2,L)\in\tuple(\tree[v])$ contributes $\alpha_{(E_1,E_2,L)}\cdot \monomial(E_1,E_2,L)=\alpha_{(E_1,E_2,L)}\cdot x^{|E_1|} y^{|E_2|} z^{|L|} \omega^{\weights(E_1\cup E_2)}$ where
 \begin{align*}
  \alpha_{(E_1,E_2,L)} = \sum_{\substack{I=J\cup K\\K\subseteq\tree^+[v]}}(-1)^{|K|} \iverson{(E_1,E_2,L)\in\disjoint(I)}.
 \end{align*}
 Thus, any specific coefficient $\alpha_{a,b,c,d}$ is equal to the sum of $\alpha_{(E_1,E_2,L)}$ over all $(E_1,E_2,L)\in\tuple(\tree[v])$ with $\monomial(E_1,E_2,L)=x^ay^bz^c\omega^d$.
 
 We now consider $P_{(r)}(\emptyset)$, i.e., $v=r$ and $\emptyset=J\subseteq\tail^+(r)=\emptyset$, and apply this for $\alpha_{\ell/2,\ell/2,n-\ell,w}$:
 \begin{align*}
  \alpha_{\ell/2,\ell/2,n-\ell,w} &= \sum_{\substack{(E_1,E_2,L)\in\tuple(\tree[r])\\\monomial(E_1,E_2,L)=x^{\ell/2}y^{\ell/2}z^{n-\ell}\omega^w}} \sum_{\substack{I=J\cup K\\K\subseteq\tree^+[v]}}(-1)^{|K|} \iverson{(E_1,E_2,L)\in\disjoint(I)}\\
  &= \sum_{\substack{(E_1,E_2,L)\in\tuple(V)\\\monomial(E_1,E_2,L)=x^{\ell/2}y^{\ell/2}z^{n-\ell}\omega^w}} \sum_{I\subseteq [2n]}(-1)^{|I|} \iverson{(E_1,E_2,L)\in\disjoint(I)}
 \end{align*}
 Note that $\tree[r]=V$, $\tree^+[r]=V^+=[2n]$, and $I=K$ in the inner sum as $J=\emptyset$.
 
 We observe that the outer sum is in fact over $(E_1,E_2,L)\in U_{w,\ell}$: 
 \begin{itemize}
  \item Recall that
  \begin{align*}
   U_{w,\ell}:= \left\{ (E_1,E_2,L) \mid E_1,E_2\in \binom{E}{\ell/2}, E_1\cap E_2=\emptyset, \weights(E_1\cup E_2)=w, L\in\binom{V}{n-\ell} \right\}.
  \end{align*}
  \item Recall also the definition of $\tuple(W)$, which can be simplified using $W=V$,
  \begin{align*}
   \tuple(V) &=\{(E_1,E_2,L) \mid{} E_1\subseteq \delta[V], E_2\subseteq \delta[V], L\subseteq V, E_1\cap E_2 = \emptyset\}\\
   &= \{(E_1,E_2,L) \mid{} E_1\subseteq E, E_2\subseteq E, L\subseteq V, E_1\cap E_2 = \emptyset\}.
  \end{align*}
  \item Finally, by definition of $\monomial(E_1,E_2,L)$, the second restriction of the sum is equivalent to $|E_1|=|E_2|=\frac\ell2$, $|L|=n-\ell$, and $\weights(E_1\cup E_2)=w$. Hence
  \begin{align*}
   \alpha_{\ell/2,\ell/2,n-\ell,w}&= \sum_{(E_1,E_2,L)\in U_{w,\ell}} \sum_{I\subseteq [2n]}(-1)^{|I|} \iverson{(E_1,E_2,L)\in\disjoint(I)}
  \end{align*}
 \end{itemize}

 We now observe that $(E_1,E_2,L)\in\disjoint(I)$ holds for $(E_1,E_2,L)\in U_{w,\ell}$ if and only if $(E_1,E_2,L)\in \bigcap_{i\in I} \overline{A}_{w,\ell,i}$: 
 \begin{itemize}
  \item Recall that $A_{w,\ell,i}\subseteq U_{w,\ell}$ for $i\in[2n]$ with
  \begin{align*}
   A_{w,\ell,i}:= \begin{cases}
                 \left\{ (E_1,E_2,L)\in U_{w,\ell} \mid i\in V(E_1)\cup L \right\} & \mbox{if $i\in[n]$}\\
                 \left\{ (E_1,E_2,L)\in U_{w,\ell} \mid i-n\in V(E_2)\cup L \right\} & \mbox{if $i\in[n+1,2n]$.}
                \end{cases}
  \end{align*}
  \item Hence, $(E_1,E_2,L)\in \overline{A}_{w,\ell,i}$ is equivalent to $i\notin V(E_1)\cup L$ if $i\in[n]$ and to $i-n\notin V(E_2)\cup L$ if $i\in[2n]$. Then $(E_1,E_2,L)\in \bigcap_{i\in I} \overline{A}_{w,\ell,i}$ if and only if this holds for all $i\in I$. This in turn is equivalent to $(E_1,E_2,L)\in\disjoint(I)$, which was defined as
 \begin{align*}
  \disjoint(I):=\{ (E_1,E_2,L) \mid{} & \forall \{s,t\}\in E_1: \{s,t\}\cap I=\emptyset,\\
  & \forall \{s,t\}\in E_2: \{s+n,t+n\}\cap I=\emptyset,\\
  & L^+\cap I=\emptyset\}.
 \end{align*}
 \end{itemize}
 
 This allows us to further transform our equation, starting with changing the order of summation:
 \begin{align*}
  \alpha_{\ell/2,\ell/2,n-\ell,w}&= \sum_{I\subseteq [2n]}(-1)^{|I|} \sum_{(E_1,E_2,L)\in U_{w,\ell}} \iverson{(E_1,E_2,L)\in\disjoint(I)}\\
  &= \sum_{I\subseteq [2n]}(-1)^{|I|} \sum_{(E_1,E_2,L)\in U_{w,\ell}} \iverson{(E_1,E_2,L)\in \bigcap_{i\in I} \overline{A}_{w,\ell,i}}\\
  &= \sum_{I\subseteq [2n]}(-1)^{|I|} \left| \bigcap_{i\in I} \overline{A}_{w,\ell,i} \right|\\
  &= |\wdpm|
 \end{align*}
 This completes the proof. 
\end{proof}

\subparagraph*{The algorithm.}
While it was convenient for proving the recurrences to define $P_{(v)}(J)$ and $Q_{[v]}(J')$ as sums over suitable triples (of essentially partial solutions), our algorithm unsurprisingly uses their representation via coefficients as polynomials in $\integers[x,y,z,\omega]$. (Also we just saw that the coefficients carry the information that we are after, e.g., $|\wdpm|$.) Similar to previous work (e.g.,~\cite{DBLP:conf/stacs/HegerfeldK20}), our algorithm proceeds by recursively computing the two types of polynomials using the established recurrences. We nevertheless spell out the pseudocode to more easily establish time and space complexity. To this end, we also roll the computation of $Q_{[v]}(J')$ into the computation of $P_{(v)}(J)$ rather than making this a separate recursive call.

\begin{algorithm}[H]
 \KwIn{$v\in V$ and $J\subseteq\tail^+(v)$}
 \KwOut{$P_{(v)}(J)$} 
 $P=0$\;
 \For(\tcp*[f]{outer sum in Lemma~\ref{lemma:pfromq}}){$J_v\subseteq\{v,v+n\}$}
 {
  $J'=J\cup J_v$\;
  $summand=(-1)^{|J_v|}$\;
  \lIf{$J_v=\emptyset$}{$summand=summand\cdot (1+z)$}
  \For(\tcp*[f]{summand for inner product}){$e=\{u,v\}\in E(v,\tail(v))$}{
   $factor=1$\;
   \lIf{$\{u,v\}\cap J'=\emptyset$}{$factor=factor+x\omega^{\weights(e)}$}
   \lIf{$\{u+n,v+n\}\cap J'=\emptyset$}{$factor=factor+y\omega^{\weights(e)}$}
   $summand=summand\cdot factor$\;
  }
  
  \tcp{multiplication with $Q_{[v]}(J')$, step by step using Lemma~\ref{lemma:qfromp}, implicit multiplication by $Q_{[v]}(J')=1$ if $v$ is a leaf}
  \For{$u\in\children(v)$}
  {
   $summand=summand\cdot \ComputeP(u,J')$\;
  }
  $P=P+ summand$\;
 }
 \Return{$P$}\;
 \caption{\ComputeP($v$,$J$)}
 \label{algo:computep}
\end{algorithm}

\subparagraph*{Time and space complexity.}
All polynomials are represented via their coefficients as polynomials in $\integers[x,y,z,\omega]$. It can be checked that the degrees of $x$ and $y$ are bounded by $m=|E|$, the degree of $z$ is bounded by $n$, and the degree of $\omega$ is bounded by the total weight $\weights(E)$ of all edges, which is at most $m\cdot N$. Using the definitions of $P_{(v)}(J)$ and $Q_{[v]}(J')$ it can also be checked that the absolute value of each coefficient is at most $2^m\cdot 2^m\cdot 2^n\cdot 4^n$ (by using trivial upper bounds for the number of terms in the summation), which gives polynomial-size binary encoding per coefficient. Overall, for $N$ polynomial in the input size, we use polynomial-size per polynomial. The recursion depth is bounded by $\tau$ and each call internally uses at most three polynomials (namely $P$, $summand$, and $factor$). The space complexity is therefore bounded by $\Oh(\tau\cdot n^{\Oh(1)})$ when $N$ is polynomially bounded.

Clearly, each single call to \ComputeP can be implemented to run in polynomial time (optionally with fast polynomial multiplication, especially when using a large weight bound $N$) when not counting the recursion. Observe that for each node $v$ there are $2^{\tail^+(v)}\leq 4^{\tau}$ possible calls involving $v$ and it can be verified that there are no repeated calls. Thus, the algorithm runs in time $4^{\tau}n^{\Oh(1)}$.

\begin{theorem}
 \label{theorem:algorithm:wdpm}
 Given a graph $G=(V,E)$ with $n$ vertices, edge weights $\weights\colon E\to\{1,\ldots,N\}$ with $N=n^{\Oh(1)}$, values $w,\ell\in\naturals$, and an elimination forest $\T$ for $G$ of depth $\tau$, the value $|\wdpm|$ can be computed in time $4^{\tau}\cdot n^{\Oh(1)}$ and polynomial space.
\end{theorem}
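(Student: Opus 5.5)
The plan is to run $\ComputeP(r,\emptyset)$ and read off the coefficient $\alpha_{\ell/2,\ell/2,n-\ell,w}$ of $P_{(r)}(\emptyset)$. By Lemma~\ref{lemma:wdpmfrompremptyset} this coefficient equals $|\wdpm|$. The argument then has three parts: correctness, the time bound, and the space bound. We assume $G$ is connected, as in this section. For a disconnected graph we would apply the procedure to each tree of the forest and combine the resulting polynomials. Since tuples split over the components, $|\wdpm|$ is then the corresponding coefficient of the product of the root polynomials.

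\textbf{Correctness.} This is an induction on the height of $v$ in $\T$, showing that $\ComputeP(v,J)$ returns $P_{(v)}(J)$ for every $J\subseteq\tail^+(v)$.
\begin{itemize}
\item The loop over $J_v\subseteq\{v,v+n\}$ and the factors built for edges in $E(v,\tail(v))$ mirror the right-hand side of Lemma~\ref{lemma:pfromq} term by term.
\item The product of the recursive calls $\ComputeP(u,J')$ over $u\in\children(v)$ equals $Q_{[v]}(J')$. This follows from the induction hypothesis together with Lemma~\ref{lemma:qfromp}; for a leaf it is the empty product, matching the base case $Q_{[v]}(J')=1$.
\item Note that $J'\subseteq\tail^+[v]=\tail^+(u)$ for each child $u$, so every recursive call is well-formed.
\end{itemize}

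\textbf{Running time.} Every call $\ComputeP(v,J)$ makes exactly four calls per child, one for each choice of $J_v$. Unrolling the recursion, the calls reaching $v$ at depth $d$ (with the root at depth $1$) correspond bijectively to sequences of choices $J_{a}\subseteq\{a,a+n\}$ for the $d-1$ strict ancestors $a$ of $v$. There are therefore at most $4^{d-1}\le 4^{\tau}$ such calls, and they have pairwise distinct arguments $J$. Summing over the $n$ nodes gives at most $n\cdot 4^\tau$ calls in total.

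Each call, excluding the work inside its recursive calls, performs polynomially many additions and multiplications of polynomials in $\integers[x,y,z,\omega]$. Their degrees are bounded by $m$ in $x$ and $y$, by $n$ in $z$, and by $mN$ in $\omega$. To keep every polynomial of polynomial size, we would truncate after each multiplication to degrees at most $\ell/2$, $\ell/2$, $n-\ell$ and $w$. This is harmless: all exponents are nonnegative, so truncating a product does not change any coefficient within the bounds, and we only need the coefficient of $x^{\ell/2}y^{\ell/2}z^{n-\ell}\omega^w$.

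To bound coefficient sizes, the definition gives each coefficient absolute value at most $|\tuple(\tree[v])|\cdot 2^{2n}\le 3^m 2^n 4^n$, so each coefficient needs $\Oh(m+n)$ bits. Arithmetic on such polynomials therefore takes $n^{\Oh(1)}$ time when $N=n^{\Oh(1)}$. Altogether the running time is $4^\tau n^{\Oh(1)}$.

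\textbf{Space.} The recursion stack has depth at most $\tau$. Each frame stores $J$, $J_v$, and a constant number of polynomials ($P$, $summand$, $factor$), each of polynomial size. The space is therefore $\tau\cdot n^{\Oh(1)}$, which is polynomial.

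\textbf{Main obstacle.} The only delicate point is the count of calls. One must check that the recursion does not blow up with the branching factor of $\T$: the children of a node are processed sequentially, not nested, so the number of calls per node is governed by the depth of the node and not by the number of its siblings. The bijection between calls and sequences of choices along the root path makes this precise.
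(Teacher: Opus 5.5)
Your proposal is correct and is essentially the paper's proof: run \ComputeP from the root, get correctness from Lemmas~\ref{lemma:qfromp} and~\ref{lemma:pfromq} and the answer from Lemma~\ref{lemma:wdpmfrompremptyset}, bound time by at most $4^{\tau}$ pairwise distinct calls per node, and bound space by recursion depth $\tau$ times polynomial-size polynomials (your bijection with choices along the root path makes the paper's ``no repeated calls'' explicit). Multiplying the root polynomials of the forest's trees is a slightly more direct alternative to the paper's DP over connected components, and you leave implicit only the trivial cases $\ell$ odd or $\ell>n$, where the target coefficient is undefined and one simply returns $0$.
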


\begin{proof}
 If $\ell>n$, or $\ell$ not even, or $w>n\cdot N$ we can return $0$ as the correct answer. If $G$ is disconnected then a simple DP lets us compute for all $i$, $w'$, and $\ell'$ the value $|\mathcal{M}_{w',\ell'}|$ for the graph induced by the first $i$ connected components of $G$ by using the algorithm for the connected case on each connected component (and suitable values $\hat{w},\hat{\ell}$). This causes only a polynomial in $n$ overhead in the running time. The algorithm for the connected case and even $\ell\leq n$ was explained above.
\end{proof}

\section{An algorithm for partial cycle cover}
\label{section:applications}

Using our algorithm for computing $|\wdpm|$, given graph $G=(V,E)$, $w,\ell\in\naturals$, and an elimination forest $\T$ of depth $\tau$ of $G$, only little work remains for solving \partialcyclecover. Similar to cut and count we also rely on symmetries of (non-)solutions having too many cycles plus the isolation lemma to (probabilistically) guarantee a unique solution (if one exists). Differently (and closer to Nederlof et al.~\cite{DBLP:journals/siamdm/NederlofPSW23} who also using matchings) we use that the edges of an even cycle have two ways of being partitioned into an ordered pair of disjoint matchings covering the same vertex sets (i.e., ordered pairs of consistent matchings). Thus, non-solutions with too many even cycles can be filtered out by counting modulo a suitable power of $2$. Since partial cycle covers may also contain odd cycles, however, we need to first use a simple trick to reduce to the bipartite case, where the above idea is sufficient.

\begin{lemma}
 \label{lemma:reducetobipartite}
 Let $G$ be a graph, let $k,\ell\in\naturals$, and let $\T$ be an elimination forest of depth $\tau$ of $G$. Let $G'$ be obtained from $G$ by subdividing all its edges and let $\ell'=2\ell$. Then $G$ has the same number of partial cycle covers with at most $k$ cycles covering exactly $\ell$ vertices as $G'$ has partial cycle covers with at most $k$ cycles covering exactly $\ell'$ vertices.
Moreover, an elimination tree $\T'$ of depth at most $\tau+1$ of $G'$ can be efficiently constructed from $\T$.
\end{lemma}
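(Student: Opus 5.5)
The proof has two independent parts: a bijection between partial cycle covers of $G$ and of $G'$, and the construction of $\T'$.

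\emph{The bijection.} For each edge $e=\{u,v\}\in E$ let $s_e$ be its subdivision vertex in $G'$, so $G'$ has edges $\{u,s_e\}$ and $\{s_e,v\}$. Map a partial cycle cover $C\subseteq E$ of $G$ to
\[
C'=\bigcup_{e\in C}\bigl\{\{u,s_e\},\{s_e,v\}\bigr\}.
\]
In $(V(G'),C')$ an original vertex has the same degree as in $(V,C)$, namely $0$ or $2$. A subdivision vertex $s_e$ has degree $2$ if $e\in C$ and $0$ otherwise. Hence $C'$ is a partial cycle cover. Each cycle of length $p$ in $C$ becomes a cycle of length $2p$ in $C'$, so the number of cycles is unchanged and the number of covered vertices doubles; thus $\ell$ becomes $\ell'=2\ell$.

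For the inverse, take a partial cycle cover $C'$ of $G'$. Every $s_e$ has degree exactly $2$ in $G'$, so in $C'$ it has degree $0$ (neither incident edge is used) or $2$ (both are used). Let $C$ be the set of edges $e$ with $s_e$ of degree $2$. Then the degree of an original vertex $v$ in $C$ equals its degree in $C'$, so $C$ is a partial cycle cover and $C'$ is exactly its image. Both maps are injective and inverse to each other. They preserve the number of cycles, and the covered vertex count changes by exactly the factor $2$ in both directions, because each cycle of $C'$ alternates between original and subdivision vertices. This gives the claimed equality of counts for every bound $k$.

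\emph{The elimination tree.} Keep $\T$ on $V$ and, for every edge $e=\{u,v\}$ with $u$ an ancestor of $v$ in $\T$, attach $s_e$ as a new leaf child of $v$, the deeper endpoint. Then both $u$ and $v$ are ancestors of $s_e$, so every edge of $G'$ joins an ancestor--descendant pair. A node of $\T$ at depth $d\le\tau$ receives leaves at depth $d+1\le\tau+1$, so the depth of $\T'$ is at most $\tau+1$. The construction takes time linear in $|V|+|E|$.

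The only point needing some care is the cycle and vertex counting in the inverse direction, specifically that the degree-$0$/$2$ dichotomy at the subdivision vertices forces $C'$ to be the image of some $C$. The rest is routine.
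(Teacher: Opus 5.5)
Your proof is correct and follows the paper's approach. The bijection subdivides (or un-subdivides) exactly the edges of the cover, and $\mathcal{T}'$ attaches each subdivision vertex as a leaf child of the lower endpoint of its edge; your degree-$0$/$2$ argument at the subdivision vertices spells out the inverse direction that the paper only sketches.
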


\begin{proof}
 Formally, let $G=(V,E)$ and let $G'=(V',E')$ with $V'=V\cup\{p_e\mid e\in E\}$ and $E'=\{\{p_e,v\} \mid e\in E, v\in e\}$. Let $\C_{k,\ell}$ denote the family of edge sets of partial cycle covers with at most $k$ cycles and exactly $\ell$ vertices in $G$. Similarly, let $\C'_{k,\ell'}$ denote the same for $G'$, with at most $k$ cycles and exactly $\ell'=2\ell$ vertices. There is a one-to-one correspondence between elements of $\C_{k,\ell}$ and those of $\C'_{k,\ell'}$: Simply do respectively undo the edge subdivisions on the edges present in the respective partial cycle cover. (E.g., a cycle $(u,v,w,u)$ in $G$ corresponds to $(u,p_{\{u,v\}},v,p_{\{v,w\}},w,p_{\{w,u\}},u)$ in $G'$ with exactly twice the length, and vice versa.) It follows that $|\C_{k,\ell}|=|\C'_{k,\ell'}|$, as claimed.
 
 To see the moreover part, take $\T$ and for each $e=\{u,v\}\in E$ attach the subdividing vertex $p_e$ as a new child node to whoever of $u$ and $v$ is lower in $\T$; crucially, $u$ and $v$ must be in ancestor-descendant-relation in $\T$ due to the edge $\{u,v\}$ in $G$. In this way, for each edge $\{s,p_{\{s,t\}}\}\in E'$ node $s$ must be an ancestor of $p_{\{s,t\}}$. Call this tree $\T'$. Clearly, the depth of $\T'$ is at most $\tau+1$.
\end{proof}

Keeping the lemma in mind, the following theorem gives an algorithm for \partialcyclecover on bipartite graphs. The case of general graphs will be an immediate corollary.

\begin{theorem}
 \label{theorem:partialcyclecover:bipartite}
 There is a randomized algorithm that, given a bipartite graph $G$, numbers $k,\ell\in\naturals$, and an elimination forest $\T$ of depth $\tau$ of $G$ solves \partialcyclecover in $4^{\tau}n^{\Oh(1)}$ time and polynomial space. The algorithm makes no false positives and may make a false negative with probability at most $\frac12$.
\end{theorem}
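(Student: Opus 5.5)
The plan is to combine Theorem~\ref{theorem:algorithm:wdpm} with the isolation lemma and a parity argument modulo $2^{k+1}$. First I handle trivial cases. If $\ell$ is odd, then since $G$ is bipartite every cycle has even length, so no partial cycle cover covers exactly $\ell$ vertices and we answer \no. If $\ell=0$, the empty cover is a solution and we answer \yes. So assume $\ell\geq 2$ is even.

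Next I sample weights $\weights\colon E\to[2m]$ uniformly and independently at random, with $m=|E|$. I then apply the isolation lemma with universe $U=E$ and family $\F$ consisting of the edge sets of partial cycle covers with at most $k$ cycles covering exactly $\ell$ vertices. If $\F\neq\emptyset$, then with probability at least $\frac12$ there is a unique $C^*\in\F$ of minimum weight $w^*$. For each $w\in[0,\ell\cdot 2m]$, I compute $|\wdpm|$ by Theorem~\ref{theorem:algorithm:wdpm}, which applies since $N=2m$ is polynomial. The algorithm answers \yes if and only if some $w$ has $|\wdpm|\not\equiv 0 \pmod{2^{k+1}}$. There are only polynomially many values of $w$, so the total time is $4^\tau n^{\Oh(1)}$ and the space is polynomial.

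The correctness core is the correspondence between pairs of consistent matchings and partial cycle covers. For $(M_1,M_2)\in\wdpm$, the set $M_1\cup M_2$ has every vertex of $V(M_1)=V(M_2)$ incident with exactly one edge of each matching, and $M_1\cap M_2=\emptyset$. So $M_1\cup M_2$ is a 2-regular subgraph on exactly $\ell$ vertices, that is, a partial cycle cover $C$ with $|C|=\ell$ edges and $\weights(C)=w$. Conversely, fix a partial cycle cover $C$ with $p$ cycles on $\ell$ vertices. Every cycle is even because $G$ is bipartite, so each cycle's edges split into two alternating perfect matchings of that cycle in exactly two ordered ways. The choices for different cycles are independent, which gives exactly $2^p$ ordered pairs $(M_1,M_2)$ with $M_1\cup M_2=C$. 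Each matching has size $\ell/2$, since $|C|=\ell$ and the two sides are equal. Hence
\begin{align*}
|\wdpm|=\sum_{p\geq 1} 2^p\cdot c_{p,w},
\end{align*}
where $c_{p,w}$ is the number of partial cycle covers with $p$ cycles, $\ell$ vertices and weight $w$.

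Modulo $2^{k+1}$, every term with $p\geq k+1$ vanishes. This gives the soundness direction: if the answer is \no, then $c_{p,w}=0$ for all $p\leq k$ and all $w$, so every count is $0 \bmod 2^{k+1}$ and we never output a false positive. For completeness, suppose isolation succeeds and take $w=w^*$. Every cover with $p\leq k$ cycles and weight $w^*$ lies in $\F$ with minimum weight, so it equals $C^*$. Therefore $|\mathcal{M}_{w^*,\ell}|\equiv 2^{p^*} \pmod{2^{k+1}}$ with $1\le p^*\leq k$, which is nonzero. The failure probability is thus at most $\frac12$.

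The main point of care is the exact $2^p$ correspondence and why the modulus separates the cases; the rest is direct invocation of earlier results. Theorem~\ref{theorem:main:partialcyclecover} then follows for general graphs via Lemma~\ref{lemma:reducetobipartite}, since that reduction increases the depth only to $\tau+1$.
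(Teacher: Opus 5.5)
Your proposal is correct and follows essentially the same route as the paper: random weights from $[2m]$, computing $|\wdpm|$ for every $w$ via Theorem~\ref{theorem:algorithm:wdpm}, the exact $2^p$-to-one correspondence between ordered pairs of consistent matchings and (even-cycle) partial cycle covers, and answering yes iff some $|\wdpm|\not\equiv 0 \pmod{2^{k+1}}$, with the isolation lemma giving the $\frac12$ error bound. Your explicit treatment of odd $\ell$ and of $\ell=0$ (and letting $w$ start at $0$) tidies edge cases the paper leaves implicit; in particular, the paper's range $w\in\{1,\ldots,\ell N\}$ is empty when $\ell=0$.
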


\begin{proof}
 We again use $\C_{k,\ell}$ for the family of edge sets of partial cycles covers in $G$ with at most $k$ cycles and on exactly $\ell$ vertices. The algorithm needs to discover (up to the error chance) whether $C_{k,\ell}\neq\emptyset$. Since $G$ is bipartite, all occurring cycles are of even length.
 
 Let $n=|V|$ and $m=|E|$. Choose $\weights\colon E\to\{1,\ldots,N\}$ with $N=2|E|=\Oh(n^2)$ uniformly and independently at random. For each $w\in\{1,\ldots,\ell N\}$, compute $|\wdpm|$ in $4^{\tau}n^{\Oh(1)}$ time and polynomial space. Output \yes if $|\wdpm|\not\equiv 0 \mod 2^{k+1}$ for any $w$; otherwise output \no. This completes the algorithm. Time and space complexity should be clear, let us argue absence of false positives and bound the probability of false negatives.
 
 We begin with some observations about the family $\wdpm$; let us recall its definition as
 \begin{align*}
  \wdpm:= \{ (M_1,M_2) \mid{} & \mbox{$M_1$ and $M_2$ are consistent matchings in $G$ of cardinality $\frac{\ell}2$ each}, \\
  & V(M_1)=V(M_2), \weights(M_1\cup M_2)=w \}.
 \end{align*}
 For each $(M_1,M_2)\in\wdpm$, the two matchings $M_1$ and $M_2$ are consistent. Hence their union is a partial cycle cover (with each vertex in $V(M_1)=V(M_2)$ having degree two and all other vertices having degree zero). Obviously $|V(M_1)|=|V(M_2)|=\ell=2|M_1|=2|M_2|$, i.e., these partial cycle covers visit exactly $\ell$ vertices each. Similar to previous work~\cite{DBLP:journals/siamdm/NederlofPSW23} we use the symmetry in partial cycles covers with more than $k$ cycles for cancellation. Presently, a partial cycle cover with exactly $p$ even cycles on exactly $\ell$ vertices corresponds to $2^p$ ordered pairs of consistent matchings on (the same) $\ell$ vertices: Let $C_1,\ldots,C_p$ denote the edge sets of the $p$ even cycles. For each $C_i$ there are two choices for putting half of its edges into the first matching and the other half into the second. Overall, this yields $2^p$ ordered pairs of consistent matchings per edge set of any partial cycle cover with $p$ (even) cycles. Let us add a few observations about these pairs:
 \begin{itemize}
  \item They all consist of two consistent matchings of cardinality $\frac{\ell}2$ each.
  \item The weight of each pair is equal to the weight of the corresponding partial cycle cover.
  \item Thus, if any pair $(M_1,M_2)$ is in $\wdpm$ then all $2^p$ pairs corresponding to the partial cycle cover $M_1\cup M_2$ are contained in \wdpm.
 \end{itemize}
 It follows that the contribution of partial cycle covers on $\ell$ vertices but with more than $k$ cycles is congruent $0$ modulo $2^{k+1}$, i.e., it does not affect the outcome of the algorithm.
 
 Let us assume first that $G$ has no partial cycle cover of at most $k$ cycles on exactly $\ell$ vertices. Then $|\C_{k,\ell}|=0$. Since every pair $(M_1,M_2)\in\wdpm$ corresponds to a partial cycle cover on exactly $\ell$ vertices in $G$ but not contained in $\C_{k,\ell}=\emptyset$, all those partial cycle covers have more than $k$ cycles. But then $|\wdpm|\equiv 0 \mod 2^{k+1}$ so the algorithm will return $\no$ irrespective of the (random) choice of $\weights$. In other words, there are no false positives.

 Let us now assume that $G$ has at least one partial cycle cover of at most $k$ cycles on exactly $\ell$ vertices. Then $|\C_{k,\ell}|\geq 1$. By the isolation lemma it follows that with probability at least $\frac12$ there is a unique element of $\C_{k,\ell}$ of minimum weight. Assuming that this succeeds, we show that the algorithm will return \yes, so a false negative as probability at most $\frac12$. Let $C^*\in\C_{k,\ell}$ be of unique minimum weight $w=\weights(C^*)$. Thus, in $\wdpm$ we have two types of pairs $(M_1,M_2)$ of consistent matchings:
 \begin{itemize}
  \item Those pairs that correspond to $C^*$, i.e., with $M_1\cup M_2=C^*$. There are exactly $2^p$ of them, where $p\leq k$ is the number of cycles in $C^*$. Their number is not congruent to $0$ modulo $2^{k+1}$.
  \item All other pairs $(M_1,M_2)$. As observed above, their union is a partial cycle cover on exactly $\ell$ vertices (and weight $w$). They do not correspond to $C^*$, which is unique among these partial cycle covers with weight $w$ and at most $k$ cycles, so their corresponding partial cycle covers have more than $k$ cycles. We have already discussed that their number is congruent $0$ modulo $2^{k+1}$.
 \end{itemize}
 Thus, assuming successful isolation (which has probability at least $\frac12$) the algorithm discovers $|\wdpm|\not\equiv 0 \mod 2^{k+1}$ and returns \yes. Thus, returning \no, a false negative, has probability at most $\frac12$. This completes the proof.
\end{proof}

Using Lemma~\ref{lemma:reducetobipartite}, the extension to general graphs is now an easy corollary.

\begin{corollary}
 The algorithmic result of Theorem~\ref{theorem:partialcyclecover:bipartite} also holds for general graphs $G$, all other assumptions being the same.
\end{corollary}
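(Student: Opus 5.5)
The plan is to reduce the general case to the bipartite one via Lemma~\ref{lemma:reducetobipartite} and then run the algorithm of Theorem~\ref{theorem:partialcyclecover:bipartite} on the result. Given $G=(V,E)$, $k$, $\ell$, and an elimination forest $\T$ of depth $\tau$, we first build the subdivided graph $G'$ with $n'=|V|+|E|$ vertices, set $\ell'=2\ell$, and construct the elimination forest $\T'$ of depth at most $\tau+1$ exactly as in the proof of Lemma~\ref{lemma:reducetobipartite}. This takes polynomial time and space.

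Next we verify that $G'$ is bipartite, with parts $V$ and $\{p_e\mid e\in E\}$, since every edge of $E'$ joins some $v\in V$ to some $p_e$. So Theorem~\ref{theorem:partialcyclecover:bipartite} applies to $(G',k,\ell',\T')$.

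By Lemma~\ref{lemma:reducetobipartite}, the two families of solutions are in bijection:
\begin{itemize}
\item $G$ has a partial cycle cover with at most $k$ cycles on exactly $\ell$ vertices
\item if and only if $G'$ has one with at most $k$ cycles on exactly $\ell'$ vertices.
\end{itemize}
Hence the answer returned for $G'$ is the correct answer for $G$. Its error behaviour carries over unchanged: there are no false positives, and a false negative occurs with probability at most $\frac12$.

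The only point needing care is the running time, because both the depth and the number of vertices change. The depth grows from $\tau$ to at most $\tau+1$, costing a factor $4^{\tau+1}=4\cdot 4^\tau$, which is a constant. The vertex count grows to $n'\le n+\binom n2=\Oh(n^2)$, so $(n')^{\Oh(1)}=n^{\Oh(1)}$. The weight bound used inside is $N=2|E'|$, which stays polynomial in $n$. Therefore the total time is $4^{\tau}n^{\Oh(1)}$ and the space remains polynomial. None of these steps is a real obstacle; the only thing to check is that the $+1$ in depth and the polynomial blow-up in size are absorbed into the stated bounds.
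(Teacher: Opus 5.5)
Your proposal is correct and matches the paper's proof: you subdivide every edge, use Lemma~\ref{lemma:reducetobipartite} to transfer the solutions (with $\ell'=2\ell$) and to obtain an elimination forest of depth at most $\tau+1$, run the bipartite algorithm, and absorb the extra factor $4$ and the polynomial size increase into $4^{\tau}n^{\Oh(1)}$. Your explicit check that $G'$ is bipartite and your bound $n'\le n+\binom{n}{2}$ spell out details that the paper leaves implicit.
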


\begin{proof}
 Let $G=(V,E)$ be a (general) graph, $k,\ell\in\naturals$, and $\T$ an elimination forest of depth $\tau$ of $G$. Construct the graph $G'$ as in Lemma~\ref{lemma:reducetobipartite} so that the number of partial cycle covers with at most $k$ cycles on exactly $\ell$ vertices in $G$ is equal to the number of partial cycle covers with at most $k$ cycles on exactly $\ell'=2\ell$ vertices in $G'$. Let $\T'$ be the obtained elimination forest of depth at most $\tau+1$ for $G'$; in short $|\C_{k,\ell}|=|\C'_{k,\ell'}|$. Since $G'$ is bipartite, we can use the algorithm of Theorem~\ref{theorem:partialcyclecover:bipartite} to determine whether $\C'_{k,\ell'}\neq \emptyset$ (up to error bounds), and return the same answer for $G$ and $k,\ell\in\naturals$. Since $G'$ is only polynomially larger than $G$, and since $\T'$ has depth at most $\tau+1$, we get the same asymptotic bounds of $4^{\tau}n^{\Oh(1)}$ time and polynomial space (and same one-side error bound).
\end{proof}

Altogether, this constitutes a proof of Theorem~\ref{theorem:main:partialcyclecover} in the introduction. The transfer to the problems listed in Corollary~\ref{corollary:main:applications} follows as explained by Nederlof et al.~\cite{DBLP:journals/siamdm/NederlofPSW23}.

\section{Conclusion}
\label{section:conclusion}

We have given a $4^{\tau}n^{\Oh(1)}$ time and polynomial space algorithm for solving the \partialcyclecover problem on graphs given with an elimination forest of depth at most $\tau$, which directly improves the time bounds for several related problems obtained in previous work~\cite{DBLP:journals/siamdm/NederlofPSW23}. It is worth noting that the dependence on the treedepth is now the same as that relative to treewidth, but the treewidth-based DP algorithm relies on an instance of cut-and-count that does not seem to transfer easily to a branching algorithm relative to treedepth (cf.~\cite{DBLP:conf/stacs/HegerfeldK20,DBLP:journals/siamdm/NederlofPSW23}). Intuitively, the DP computes path packings with each path being red or blue (a known perspective of cut-and-count), using the symmetries in possible colorings of disconnected solutions, but the role of vertices changes as we update partial solutions. This does not seem to work in the branching, polynomial-space setting used (so far) relative to treedepth.

With the $4^{\tau}n^{\Oh(1)}$ time and polynomial space, e.g., for \hamiltoniancycle, it is natural to ask about further improvements. Notably, the conditional lower bound of $(2+\sqrt{2}-\varepsilon)^\pw n^{\Oh(1)}$ (i.e., relative to pathwidth)~\cite{DBLP:journals/jacm/CyganKN18} has no implications for the treedepth parameterization. Nevertheless, it is a natural and interesting question whether $(2+\sqrt{2})^\tau n^{\Oh(1)}$ time and polynomial space are possible when given an elimination forest of depth $\tau$. If even possible, this should be quite challenging as the DP algorithm relative to pathwidth is a much more involved version of that relative to treewidth. (To note, this DP could be adapted to work on tree decompositions as well, but the lack of a fast operation at join nodes makes it worse than $4^\tw n^{\Oh(1)}$ time.)
Generally, it is very interesting what (conditionally optimal) algorithmic results can be transferred from treewidth and pathwidth to work in the same time but polynomial space relative to treedepth. Similarly, much (if not more) remains to do regarding (conditional) lower bounds relative to treedepth.

\bibliography{hc_td.bib}

\end{document}